\newtheorem{thm}{Theorem}
\newtheorem{lemma}{Lemma}
\newtheorem{prop}{Proposition}
\theoremstyle{definition}
\newtheorem{remark}{Remark}
\newcommand\R{{\ensuremath {\mathbb R} }}
\newcommand\C{{\ensuremath {\mathbb C} }}
\newcommand\1{{\ensuremath {\mathds 1} }}
\renewcommand\phi{\varphi}
\newcommand{\alp}{\bm{\alpha}}
\newcommand{\gH}{\mathfrak{H}}
\newcommand{\gS}{\mathfrak{S}}
\newcommand{\cP}{\mathcal{P}}
\newcommand\ii{{\ensuremath {\infty}}}
\newcommand{\norm}[1]{ \left| \! \left| #1 \right| \! \right| }
\def\tr{\mathop{\rm tr}\nolimits} 
\newcommand{\cF}{\mathcal{F}}
\newcommand{\F}{\mathcal{F}}
\newcommand{\cT}{\mathcal{T}}
\newcommand{\cK}{\mathcal{K}}
\newcommand{\cB}{\mathcal{B}}
\newcommand{\cC}{\mathcal{C}}
\newcommand{\cA}{\mathcal{A}}
\newcommand{\half}{\mbox{$\frac12$}}
\begin{document}
\title{A Nonlinear Model for Relativistic Electrons At Positive Temperature}
\author{C. HAINZL, M. LEWIN and R. SEIRINGER}
\
\bigskip\bigskip

\begin{center}
  \Large \textbf{A Nonlinear Model for Relativistic Electrons at Positive Temperature}
\end{center}

\medskip

\begin{center}
 \large Christian HAINZL$^a$, Mathieu LEWIN$^b$ \& Robert SEIRINGER$^c$
\end{center}

\medskip

\begin{center}
\small

 $^a$Department of Mathematics, UAB, Birmingham, AL 35294-1170, USA.

\texttt{hainzl@math.uab.edu}

\smallskip

$^b$CNRS \& Department of Mathematics (CNRS UMR8088), University of Cergy-Pontoise, 2 avenue Adolphe Chauvin, 95302 Cergy-Pontoise Cedex, FRANCE.

\texttt{Mathieu.Lewin@math.cnrs.fr}

\smallskip

$^c$Department of Physics, Jadwin Hall, Princeton University, P.O. Box 708, Princeton, New Jersey 08544, USA.

\texttt{rseiring@math.princeton.edu}
\end{center}

\begin{center}
 \it February 26, 2008
\end{center}

\smallskip

\begin{abstract}
  We study the relativistic electron-positron field at positive
  temperature in the Hartree-Fock-approximation. We consider both the
  case with and without exchange term, and investigate the existence
  and properties of minimizers. Our approach is non-perturbative in
  the sense that the relevant electron subspace is determined in a
  self-consistent way. The present work is an extension of previous
  work by Hainzl, Lewin, S\'er\'e, and Solovej where the case of zero
  temperature was considered.
\end{abstract}

\tableofcontents

\section*{Introduction}\addcontentsline{toc}{section}{Introduction}

In Coulomb gauge and when photons are neglected, the Hamiltonian of
Quantum Electrodynamics (QED) reads formally \cite{Hei,BD,HLSo,HLSS}
\begin{equation}
\label{Hamiltonian01} \mathbb{H}^\phi = \int \Psi^*(x) D^0 \Psi(x)
\,dx -\int\phi(x){\rho}(x)dx+ \frac{\alpha}2 \iint
\frac{\rho(x)\rho(y)}{|x-y|}dx\,dy\,.
\end{equation}
Here $\Psi(x)$ is the second-quantized field operator satisfying
the usual anti-commutation relations, and $\rho(x)$ is the density
operator
\begin{equation}
\label{Rho1}
\rho(x)=\frac{\sum_{\sigma=1}^4[\Psi^*(x)_\sigma,\Psi(x)_\sigma]}{2}=
\frac{\sum_{\sigma=1}^4\left\{\Psi^*(x)_\sigma\Psi(x)_\sigma-\Psi(x)_\sigma\Psi^*(x)_\sigma\right\}}{2},
\end{equation}
where $\sigma$ is the spin variable. In \eqref{Hamiltonian01},
$D^0=-i\alp\cdot\nabla+\beta$ is the usual free Dirac operator,
$\alpha$ is the \emph{bare} Sommerfeld fine structure constant and
$\phi$ is the external potential. The matrices $\alp =
(\alpha_1,\alpha_2,\alpha_3)$ and $\beta$ are the usual $4\times 4$
anti-commuting Dirac matrices. We have chosen a system of units
such that $\hbar=c=m=1$. In QED, one main issue is the minimization of
the Hamiltonian \eqref{Hamiltonian01}. However, even if we implement
an UV-cutoff, the Hamiltonian is unbounded from below, since the
particle number can be arbitrary.

In a formal sense this problem was first overcome by Dirac, who
suggested that the vacuum is filled with infinitely many particles
occupying the negative energy states of the free Dirac operator
$D^0$. With this axiom Dirac was able to conjecture the
existence of holes in the Dirac sea which he interpreted as {\em
  anti-electrons} or {\em positrons}. His prediction was verified by
Anderson in 1932.  Dirac also predicted \cite{D1,D2} the phenomenon of
vacuum polarization: in the presence of an electric field, the virtual
electrons are displaced and the vacuum acquires a non-uniform charge density.

In Quantum Electrodynamics Dirac's assumption is sometimes implemented
via {\em normal ordering} which essentially consists of subtracting
the kinetic energy of the negative free Dirac sea, in such a way that
the kinetic energy of electrons as well as positrons (holes) becomes
positive. With this procedure the distinction between electrons and
positrons is put in by hand.

It was pointed out in \cite{HLSo} (see also the review \cite{HLSS}),
however, that normal ordering is probably not well suited to the case
$\alpha\neq0$ of interacting particles (the interaction is the last
term of \eqref{Hamiltonian01}). Instead a procedure was presented
where the distinction between electrons and positrons is not an input
but rather a consequence of the theory. The approach of \cite{HLSo} is
rigorous and fully non-perturbative, but so far it was only applied to
the mean-field (Hartree-Fock) approximation, with the photon field
neglected. It allowed to justify the use of the Bogoliubov-Dirac-Fock
model (BDF) \cite{CI}, studied previously in \cite{HLS1,HLS2,HLS3}.
The purpose of the present paper is to extend
these results to the nonzero temperature case.

The methodology of \cite{HLSo} is a two steps procedure. First, the
free vacuum is constructed by minimizing the Hamiltonian
\eqref{Hamiltonian01} over Hartree-Fock states in a box with an
ultraviolet cut-off, and then taking the thermodynamic limit when
the size of the box goes to infinity. The limit is a Hartree-Fock
state $\cP^0_-$ describing the (Hartree-Fock) free vacuum
\cite{HLSo,HLSS}. It has an infinite energy, since it contains
infinitely many virtual particles forming the (self-consistent)
Dirac sea. We remark that this state is not the usual sea of
negative electrons of the free Dirac operator because all
interactions between particles are taken into account, but it
corresponds to filling negative energies of an effective mean-field
translation invariant operator.

The second step of \cite{HLSo} consists of constructing an energy
functional that is bounded from below  in the presence of an
external field, by subtracting the (infinite) energy of the free
self-consistent Dirac sea.  The key observation is that the
difference of the energy of a general state $P$ minus the (infinite)
energy of the free vacuum $\cP^0_-$ can be represented by an
effective functional (called Bogoliubov-Dirac-Fock (BDF) \cite{CI})
which only depends on $Q=P-\cP^0_-$, describing the variations with
respect to the free Dirac sea.  The BDF energy was studied in
\cite{HLS1,HLS2,HLS3}. The existence of ground states was shown for
the vacuum case in \cite{HLS1,HLS2} and in charge sectors in
\cite{HLS3}. For a detailed review of all these results, we refer to
\cite{HLSS}. An associated time-dependent evolution equation, which
is in the spirit of Dirac's original paper \cite{D1}, was studied in
\cite{HLSp}.

\bigskip

Let us now turn to the case of a non zero temperature
$T=1/\beta>0$. We consider a Hartree-Fock state with one-particle density matrix
$0\leq P\leq 1$.
Because
of the definition of the Hamiltonian \eqref{Hamiltonian01} and the
anticommutator in \eqref{Rho1}, it is more convenient to consider as
variable the \emph{renormalized density matrix} $\gamma=P-1/2$. We
remark that the anticommutator in \eqref{Rho1} is a kind of
renormalization which does not depend on any reference as normal
ordering does (it just corresponds to subtracting the identity divided
by 2). The anticommutator of \eqref{Rho1} is due to Heisenberg
\cite{Hei} (see also \cite[Eq. $(96)$]{Pauli}) and it is necessary for
a covariant formulation of QED, see \cite[Eq. $(1.14)$]{Sch1} and
\cite[Eq. $(38)$]{Dy1}.

Computing the free energy of our Hartree-Fock state using \eqref{Hamiltonian01} (and ignoring infinite constant terms) one arrives at the following free energy functional
 \cite{HLSo,BLS}
\begin{multline}\label{energyQED}
\mathcal{F}^{\rm QED}_T (\gamma) =\tr(D^0 \gamma) -\alpha \int
\varphi(x) \rho_\gamma(x) + \frac \alpha 2 \iint \frac{\rho_\gamma(x)
\rho_\gamma(y)}{|x-  y|}\\ - \frac{\alpha}{2} \int \int \frac{\tr_{\C^4} |\gamma(x
,y)|^2}{|x-y|} -TS(\gamma)
\end{multline}
where the \emph{entropy} is given by the formula
\begin{equation}
S(\gamma)=-\tr\left((\half+\gamma)\ln(\half+\gamma)\right)-\tr\left((\half-\gamma)\ln(\half-\gamma)\right).
\label{def_entropy}
\end{equation}
The (matrix-valued) function $\gamma(x,y)$ is the formal {integral kernel} of the operator $\gamma$ and $\rho_\gamma(x):=\tr_{\C^4}\gamma(x,x)$ is the associated charge density. The above formulas are purely formal; they only make sense in a finite box with an ultraviolet cut-off, in general.

As in \cite{HLSo} the first step is to define the free vacuum at
temperature $T$, which is the formal minimizer of \eqref{energyQED}
when $\phi=0$. Following \cite{HLSo}, one can first confine the system
to a box, then study the limit as the size of the box goes to infinity
and identify the free vacuum as the limit of the sequence of ground
states. Alternatively, it was proved in \cite{HLSo} that the free
vacuum can also be obtained as the unique minimizer of the free energy
per unit volume. In the nonzero temperature case, this energy reads
\begin{align*}
&\cT_T(\gamma)=\\
&\frac{1}{(2\pi)^{3}}\int_{B(0,\Lambda)}\tr_{\C^4}[
D^0(p)\gamma(p)]dp -\frac{\alpha}{(2\pi)^5}\iint_{B(0,\Lambda)^2}\frac{\tr_{\C^4}
[\gamma(p)\gamma(q)]}{|p-q|^2}dp\,dq\\
&+\frac{T}{(2\pi)^3}\int_{B(0,\Lambda)}\!\!\tr_{\C^4}\!\!\big[
\left(\half+\gamma(p)\right)\ln\left(\half+\gamma(p)\right)+\left(\half-\gamma(p)\right)\ln\left(\half-\gamma(p)\right)\!\big]dp
\end{align*}
and it is defined for translation-invariant states $\gamma=\gamma(p)$
only, under the constraint $-1/2\leq\gamma\leq 1/2$. Here,
$B(0,\Lambda)$ denotes the ball of radius $\Lambda$ centered at the
origin. The real number $\Lambda>0$ is the ultraviolet cut-off. We
shall prove in Theorem \ref{thm_free_case} that the above energy has a
unique minimizer $\tilde\gamma^0$, and prove several interesting
properties of it. In particular, we shall see that it satisfies a
nonlinear equation of the form
\begin{equation}
\tilde\gamma^0=\frac12\left( \frac{1}{1+e^{ \beta D_{\tilde\gamma^0}}}-\frac{1}{1+e^{-\beta D_{\tilde\gamma^0}}}\right) \\
\end{equation}
i.e. it is the Fermi-Dirac distribution of a (self-consistent) free Dirac operator, defined as
$$D_{\tilde\gamma^0}=D^0-\alpha\frac{\tilde\gamma^0(x,y)}{|x-y|}.$$
(The last term stands for the operator having this integral kernel.)
This extends results of \cite{HLSo} to the $T>0$ case.

The next step is to formally subtract the (infinite) energy of $\tilde{\gamma}^0$ from the energy of any state $\gamma$. In this way one obtains a Bogoliubov-Dirac-Fock free energy at temperature $T=1/\beta$ which can be formally written as
\begin{align}
\mathcal{F}_T (\gamma) &=\text{``}\mathcal{F}^{\rm QED}_T (\gamma)-\mathcal{F}^{\rm QED}_T (\tilde\gamma^0)\text{''}\nonumber\\
&=TH(\gamma,\tilde{\gamma}^0) -\alpha \int
\varphi(x) \rho_{[\gamma-\tilde{\gamma}^0]}(x) + \frac \alpha 2 \iint \frac{\rho_{[\gamma-\tilde{\gamma}^0]}(x)
\rho_{[\gamma-\tilde{\gamma}^0]}(y)}{|x-  y|}\nonumber\\ &\qquad\qquad\qquad- \frac{\alpha}{2} \int \int \frac{\tr_{\C^4}|(\gamma-\tilde{\gamma}^0)(x
,y)|^2}{|x-y|}\label{energyBDF_intro}
\end{align}
where $H$ is the \emph{relative entropy} formally defined as
$$TH(\gamma,\tilde{\gamma}^0)=\text{``}\tr(D_{\tilde{\gamma}^0}(\gamma-\tilde{\gamma}^0))-TS(\gamma)+TS(\tilde\gamma^0)\text{''}.$$
We shall consider external field of the form $\varphi =
\nu\ast \frac 1{|x|}$, where $\nu$ represents the density
distribution of the external particles, like nuclei, or molecules.

In Section \ref{sec:external}, we show how to give a correct
mathematical meaning to the previous formulas and we prove that the
BDF free energy is bounded from below.  An important tool is the
following inequality
\begin{equation}\label{freebdfenergy_intro}
TH(\gamma,\tilde \gamma^0) \geq \tr\left[ |D_{\tilde
\gamma^0}|(\gamma - \tilde \gamma^0)^2\right] \geq \tr \left[
|D^0|(\gamma - \tilde \gamma^0)^2\right].
\end{equation}
This implies that the relative entropy can control the exchange term
and enables us to show that $\F_T$ is bounded from below.

Unfortunately, like for the $T=0$ case, the free BDF energy is not
convex, which makes it a difficult task to prove the existence of a
minimizer. Although we leave this question open, we derive some
properties for a potential minimizer in Section \ref{sec:external}. In
particular we prove that any minimizer $\gamma$ satisfies the
following nonlinear equation
\begin{equation}
\gamma=\frac12\left( \frac{1}{1+e^{ \beta D_{\gamma}}}-\frac{1}{1+e^{-\beta D_{\gamma}}}\right)
\end{equation}
where the (self-consistent) Dirac operator reads
$$D_{\gamma}=D^0+\alpha\rho_\gamma\ast|\cdot|^{-1}-\alpha\phi-\alpha\frac{\gamma^0(x,y)}{|x-y|}.$$
Compared with the zero temperature
case, the main difficulty in proving the existence of a minimizer
comes from localization issues of the relative entropy which are more
involved than in the zero temperature case.

\bigskip

As a slight simplification, we thoroughly study the \emph{reduced}
Hartree Fock case for $T > 0$, where the exchange term (the first term
of the second line of \eqref{energyQED}) is neglected. In the
zero-temperature case, this model was already studied in detail in
\cite{HLS2} and \cite{GLS}. The corresponding free vacuum is now
simple: it is the Fermi-Dirac distribution corresponding to the usual
free Dirac operator $D^0$,
$$\gamma^0=\frac12\left(\frac1{1+e^{\beta D^0}}- \frac1{1+e^{-\beta D^0}}\right).
 $$
 The {\em reduced Bogoliubov-Dirac-Fock} free energy is obtained in the same way as before by subtracting the infinite energy of the free Dirac see $\gamma^0$ to the (reduced) Hartree-Fock energy. It is given by
$$\F^{\rm red}_T(\gamma) = T H(\gamma, \gamma^0)   - \alpha \int \varphi\rho_{[\gamma - \gamma^0]}+ \frac \alpha 2 \iint \frac{\rho_{[\gamma-{\gamma}^0]}(x)
\rho_{[\gamma-{\gamma}^0]}(y)}{|x-  y|}dx\,dy,$$
$H(\gamma,\gamma^0)$ being defined similarly as before.
As this functional is now convex, we can prove in Theorem \ref{thm_exists_rBDF} that it has a unique minimizer $\bar\gamma$, which satisfies the self-consistent equation
$$ \bar\gamma = \frac12\left(\frac1{1+e^{\beta D_{\bar\gamma}}}-\frac{1}{1+e^{-\beta D_{\bar\gamma}}} \right)$$
where
$$D_{\bar\gamma} :=D^0+\alpha\rho_{\bar\gamma-\gamma^0}\ast|\cdot|^{-1}-\alpha\phi $$
in this case.

Additionally we show in Theorem \ref{thm_Debye_rBDF} that this
minimizer has two interesting properties. First, $\bar\gamma-\gamma^0$
is a trace-class operator. In the zero
temperature case, on the other hand, it was proved in \cite{HLS2} that the
minimizer is never trace-class for $\alpha>0$. This was indeed the
source of complications concerning the definition of the trace (and
hence of the charge) of Hartree-Fock states \cite{HLS1} when
$T=0$. This is related to the issue of renormalization
\cite{HLS2,HLSo,GLS}. Although we do not minimize in the trace-class
in the case $T\neq0$ but rather in the Hilbert-Schmidt class because
the free energy is only coercive for the Hilbert-Schmidt norm, it
turns out that the minimizer is trace-class nevertheless.

The second (and related) interesting property shown in Theorem
\ref{thm_Debye_rBDF} below is that the total electrostatic potential
created by the density $\nu$ and the polarized Dirac sea decays very
fast. More precisely we prove that
$$\rho_{\bar \gamma}-\nu\in
L^1(\R^3)\quad\text{and}\quad (\rho_{\bar \gamma} - \nu)\ast \frac 1{|x|} \in
L^1(\R^3).
$$
Necessarily, the charge of $\rho_{\bar \gamma}$ and the charge of the
external sources have to be equal. More precisely the {\em effective}
potential has a much faster decay at infinity than $1/|x|$, which
shows that the {\em effective} potential is screened.  In other words
due to the positive temperature, the particles occupying the Dirac-sea
have enough freedom to rearrange in such a way that the external
sources are totally shielded. Within non-relativistic fermionic plasma
this effect is known as {\em Debye-screening}. Let us emphasize
that in order to recover such a screening, it is essential to
calculate the Gibbs-state in a self-consistent way.

These two properties of the minimizer of the reduced theory probably
also hold for the full BDF model with exchange term. However, like for
the case $T=0$, the generalization does not seem to be
straightforward.

\bigskip

The paper is organized as follows. The first section is devoted to the
presentation of our results for the \emph{reduced} model which is
simpler and for which we can prove much more than for the general
case. In the second section, we consider the original Hartree-Fock
model with exchange term. We prove the existence and uniqueness of the
free Hartree-Fock vacuum, define the BDF free energy in the presence
of an external field and provide some interesting properties of
potential minimizers. In the last section we provide some details of
proofs which are a too lengthy to be put in the main text.

\bigskip

\noindent\textbf{Acknowledgments.} M.L. acknowledges support from the
ANR project ``ACCQUAREL'' of the French ministry of research. R.S. was
partially supported by U.S. NSF grant PHY-0652356. and by an
A.P. Sloan fellowship.

\section{The reduced Bogoliubov-Dirac-Fock free energy }
\subsection{Relative entropy}\label{sec:def_H}
Throughout this paper, we shall denote by $\gS_p(\gH)$ the usual Schatten class of operators $Q$ acting on a Hilbert space $\gH$ and such that $\tr(|Q|^p)<\ii$. The UV cut-off is implemented like in \cite{HLS1,HLS2,HLS3,HLSo} in Fourier space by considering the
Hilbert space
\begin{equation}
\gH_\Lambda:=\left\{\psi \in L^2(\R^3, \C^4)\ |\ {\rm supp}\, \hat \psi \subset
B(0,\Lambda)\right\}\,,
\label{cutoff2}
\end{equation}
with $B(0,\Lambda)$ denoting the ball of radius $\Lambda$ centered at the origin.
We denote by `$\tr$' the usual trace functional on $\gS_1(\gH_\Lambda)$.
Within the reduced theory, the free vacuum at temperature $T=\beta^{-1}>0$ is the self-adjoint operator acting on $\gH_\Lambda$ defined by
\begin{equation}
\gamma^0=\frac12\left(\frac1{1+e^{\beta D^0}}- \frac1{1+e^{-\beta
D^0}}\right). \label{def_free_red}
\end{equation}
Notice when $T\to0$ ($\beta\to\ii$), we recover the usual formula \cite{HLS1,HLS2,GLS}
$\gamma^0=-D^0/2|D^0|$.

We assume that $T>0$ henceforth. Notice that thanks to the cut-off in
Fourier space and the gap in the spectrum of $D^0$, the spectrum of
$\gamma^0$ does not include $0$ or $\pm 1/2$. In fact, it is given by
\begin{multline}
\sigma(\gamma^0) = \left[-\frac12+\frac{e^{-\beta E(\Lambda)}}{1+e^{-\beta E(\Lambda)}} , -\frac12+\frac{e^{-\beta}}{1+e^{-\beta}}\right]\\
\cup \left[\frac12-\frac{e^{-\beta}}{1+e^{-\beta}} , \frac12-\frac{e^{-\beta E(\Lambda)}}{1+e^{-\beta E(\Lambda)}}\right]
\label{spectrum_free}
\end{multline}
where $E(\Lambda)=\sqrt{1+\Lambda^2}$. Also the charge density of the free vacuum $\gamma^0$ at temperature $T$ vanishes:
\begin{equation}
\rho_{\gamma^0}=\frac{1}{2(2\pi)^{3}}\int_{B(0,\Lambda)}\tr_{\C^4}\left(\frac1{1+e^{\beta D^0(k)}}- \frac1{1+e^{-\beta D^0(k)}}\right)dk = 0.
\label{rho_vanishes}
\end{equation}

We shall denote the class of Hilbert-Schmidt perturbations of $\gamma^0$ by $\cK$:
\begin{equation}
\cK:=\left\{\gamma\in\cB(\gH_\Lambda)\ |\ \gamma^*=\gamma,\ -\frac{1}{2}\leq\gamma\leq \frac{1}{2},\ \gamma-\gamma^0\in\gS_2(\gH_\Lambda)\right\}.
\label{def_convex_set}
\end{equation}
The \emph{relative entropy} reads
\begin{multline}
H(\gamma,\gamma^0)=\tr\bigg[\left(\half+\gamma\right)\left(\ln\left(\half+\gamma\right)-\ln\left(\half+\gamma^0\right)\right)\\
+\left(\half-\gamma\right)\left(\ln\left(\half-\gamma\right)-\ln\left(\half-\gamma^0\right)\right)\bigg].
\label{def_H_log}
\end{multline}
Note that since $\gamma\in\cK$ is a compact perturbation of $\gamma^0$, we always have $\sigma_{\rm ess}(\gamma)=\sigma_{\rm ess}(\gamma^0)$. Hence $\sigma(\gamma)$ only contains eigenvalues of finite multiplicity in the neighborhood of $\pm1/2$.
Using the integral formula
\begin{equation}
\ln a - \ln b = -\int_0^\infty \left[ \frac 1{a+t} - \frac 1{b+t}\right]dt=
\int_0^\infty \frac 1{a+t}(a-b)\frac 1{b+t}dt,
\label{formula_log}
\end{equation}
we easily see that Eq.~\eqref{def_H_log} is well defined as soon as $\gamma\in\cK$, $\gamma-\gamma^0\in\gS_1(\gH_\Lambda)$, since the spectrum of $\gamma^{0}$ does not contain $\pm1/2$.

When $\gamma-\gamma^0\in\cK$ is merely Hilbert-Schmidt, we may define the relative entropy by the integral formula
\begin{equation}
\boxed{\ H(\gamma,\gamma^0)=\tr\left(\int_{-1}^1\frac{2}{1+2u\gamma^0}(\gamma-\gamma^0)\frac{1-|u|}{1+2u\gamma}(\gamma-\gamma^0)\frac{1}{1+2u\gamma^0}du\right) \ }
\label{def_H_int}
\end{equation}
It is clear that this provides a well defined object in $\cK$ as one has
$$\forall\gamma\in\cK,\ \forall u\in[-1,1],\quad 0\leq \frac{1-|u|}{1+2u\gamma}\leq 1\quad\text{and}\quad 0\leq\frac{1}{1+2u\gamma^0}\leq \frac{1}{\epsilon}$$
for some $\epsilon>0$, by \eqref{spectrum_free}. It is not difficult
to see that (\ref{def_H_int}) and (\ref{def_H_log}) coincide when
$\gamma-\gamma^0 \in \gS_1(\gH_\Lambda)$. We shall discuss this in the
appendix. But (\ref{def_H_int}) has the advantage of being
well-defined for all $\gamma\in \cK$, and hence we use
(\ref{def_H_int}) for a {\it definition} of $H$ henceforth.

Our first result is the

\begin{thm}[{\bf Properties of Relative Entropy}]\label{prop_rel_entropy}
  The functional $\gamma\mapsto H(\gamma,\gamma^0)$ defined in
  (\ref{def_H_int}) is strongly continuous on $\cK$ for the topology
  of $\gS_2(\gH_\Lambda)$. It is convex, hence weakly lower
  semi-continuous (wlsc).  Moreover, it is coercive on $\cK$ for the
  Hilbert-Schmidt norm:
\begin{equation}
\forall\gamma\in\cK,\qquad TH(\gamma,\gamma^0)\geq \tr\left(|D^0|(\gamma-\gamma^0)^2\right)
\label{estim_below_H}
\end{equation}
where we recall that $T=\beta^{-1}$ is the temperature.
\end{thm}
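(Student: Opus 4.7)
\textbf{(a) Strong continuity on $(\cK,\|\cdot\|_{\gS_2})$.} By the spectrum formula \eqref{spectrum_free}, $\sigma(\gamma^0)$ is a compact subset of $(-\tfrac12,\tfrac12)$, so $(1+2u\gamma^0)^{-1}$ is uniformly bounded in operator norm for $u\in[-1,1]$. Since $-\tfrac12\leq\gamma\leq\tfrac12$, the operator $\tfrac{1-|u|}{1+2u\gamma}$ is bounded by $1$. Hence the integrand in \eqref{def_H_int} is dominated by $C\|\gamma-\gamma^0\|_{\gS_2}^2$ uniformly in $u$. When $\gamma_n\to\gamma$ in $\gS_2$, $\gS_2$-convergence implies operator-norm convergence, and the resolvent identity gives $(1+2u\gamma_n)^{-1}\to(1+2u\gamma)^{-1}$ in operator norm for each $u$; combined with $\gS_2$-convergence of $\gamma_n-\gamma^0$, the trace converges pointwise in $u$. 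Dominated convergence on $[-1,1]$ finishes the continuity step.

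\textbf{(b) Convexity and weak lower semi-continuity.} On the dense subset $\cK_1:=\{\gamma\in\cK:\gamma-\gamma^0\in\gS_1\}$, formula \eqref{def_H_log} applies and one has $H(\gamma,\gamma^0)=\tr[f(\gamma)-f(\gamma^0)-f'(\gamma^0)(\gamma-\gamma^0)]$ with $f(x)=(\tfrac12+x)\ln(\tfrac12+x)+(\tfrac12-x)\ln(\tfrac12-x)$ convex on $[-\tfrac12,\tfrac12]$. Then $\gamma\mapsto\tr[f(\gamma)]$ is a convex trace functional and the other terms are affine, so $H$ is convex on $\cK_1$. Since finite-rank perturbations of $\gamma^0$ are $\gS_2$-dense in $\cK$, the continuity from (a) lets the convexity inequality pass to the $\gS_2$-limit and yields convexity on all of $\cK$. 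A strongly continuous convex functional on the $\gS_2$-closed convex set $\cK$ is then automatically weakly lower semi-continuous.

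\textbf{(c) Coercivity.} The plan is to diagonalise $D^0$ with eigenvalues $\{e_j\}$ and eigenbasis $\{|j\rangle\}$; then $\gamma^0$ is diagonal with $\lambda_j^0=-\tfrac12\tanh(\beta e_j/2)$. Expanding the inner resolvent $(1+2u\gamma)^{-1}$ via a spectral decomposition $\gamma=\sum_k\nu_k|\phi_k\rangle\langle\phi_k|$ and using cyclicity of the trace in \eqref{def_H_int} gives
\begin{equation*}
H(\gamma,\gamma^0)\;=\;\sum_{j,k} |\langle\phi_k|Q|j\rangle|^2\,I(\lambda_j^0,\nu_k),\qquad I(\lambda^0,\nu):=\int_{-1}^1\frac{2(1-|u|)}{(1+2u\lambda^0)^2(1+2u\nu)}\,du.
\end{equation*}
Using $\sum_k|\langle\phi_k|Q|j\rangle|^2=\langle j|Q^2|j\rangle$ and $\tr[|D^0|Q^2]=\sum_j|e_j|\langle j|Q^2|j\rangle$, the desired bound $H(\gamma,\gamma^0)\geq\beta\tr[|D^0|Q^2]$ reduces to the pointwise scalar inequality
\begin{equation*}
I(\lambda^0,\nu)\;\geq\;\beta|e|\qquad\text{for all }\nu\in[-\tfrac12,\tfrac12],\ \text{with }\lambda^0=-\tfrac12\tanh(\beta e/2).
\end{equation*}
By the identity $h(\nu,\lambda^0)=(\nu-\lambda^0)^2\,I(\lambda^0,\nu)=\int_{\lambda^0}^{\nu}(\nu-\mu)/(\tfrac14-\mu^2)\,d\mu$, this is equivalent to the Pinsker-type bound $h(\nu,\lambda^0)\geq\beta|e|(\nu-\lambda^0)^2$ for the binary Fermi--Dirac relative entropy, with $p:=\tfrac12+\nu$ and $p^0:=1/(1+e^{\beta e})$. \emph{The main obstacle} lies in this scalar bound: the Hessian $h''(\mu)=1/(\tfrac14-\mu^2)$ is not monotone on $[-\tfrac12,\tfrac12]$, so the naive convexity argument breaks down for $\beta|e|>2$; one must analyse $\varphi(p):=D(p\|p^0)-\beta|e|(p-p^0)^2$ directly, using $\varphi(p^0)=\varphi'(p^0)=0$ together with $\varphi''(p^0)=4\cosh^2(\beta e/2)-2\beta|e|\geq0$ (since $1+\cosh(x)\geq x$), checking nonnegativity at the boundaries $p=0,1$ via $p^0\approx e^{-\beta|e|}$, and locating the at-most-two other critical points of $\varphi$ via the transcendental equation $\ln(p/(1-p))=\beta e(2p-2p^0-1)$ and showing $\varphi\geq0$ at each. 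Once this scalar Pinsker-type bound is in hand, the operator inequality follows immediately from the spectral expansion above.
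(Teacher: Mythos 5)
Your parts (a) and (b) are correct in substance, but both take routes different from the paper. For strong continuity, you argue by domination and pointwise convergence of the integrand in $u$; the paper instead proves an explicit quantitative bound by splitting $\int_{-1}^{1}=\int_{-1}^{-1+\eta}+\int_{-1+\eta}^{1-\eta}+\int_{1-\eta}^{1}$ and estimating the middle piece by $\eta^{-1}\norm{\gamma-\gamma'}_{\gS_2}$ and the two end pieces by $\eta\big(\norm{\gamma-\gamma^0}_{\gS_2}^2+\norm{\gamma'-\gamma^0}_{\gS_2}^2\big)$. Both work. For convexity, you go back to formula (\ref{def_H_log}) on the dense subset $\{\gamma-\gamma^0\in\gS_1\}$ and appeal to convexity of $\gamma\mapsto\tr f(\gamma)$, then pass to the limit; this is close to an earlier version of the paper but requires some care since the individual traces $\tr f(\gamma)$, $\tr f(\gamma^0)$ are not separately finite. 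The published proof avoids this entirely by rewriting, for each fixed $u$,
\begin{equation*}
(\gamma-\gamma^0)\frac{1}{1+2u\gamma}(\gamma-\gamma^0)=\frac{1}{(2u)^2}\Big(2u\gamma-1-4u\gamma^0+(1+2u\gamma^0)\frac{1}{1+2u\gamma}(1+2u\gamma^0)\Big),
\end{equation*}
whose last term is operator-convex in $\gamma$ (as $A\mapsto B^*A^{-1}B$ is), making the integrand in (\ref{def_H_int}) manifestly convex with no density or limiting argument.

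The genuine gap is in part (c). Your reduction to the scalar inequality $D(p\Vert p^0)\geq\beta|e|(p-p^0)^2$ is correct in spirit (it is exactly the content of Klein's inequality, which the paper cites and which also sidesteps the fact that $D^0$ has continuous spectrum on $\gH_\Lambda$, so your eigenbasis sum is formal). You correctly observe that the naive argument $\varphi''\geq 0$ fails for $\beta|e|>2$ since $\varphi''(p)=1/(p(1-p))-2\beta|e|$ is negative near $p=1/2$. But you then only \emph{outline} a program of locating and checking critical points of $\varphi$, without carrying it out; as written, the key scalar inequality remains unproved. The paper closes this step cleanly by establishing the \emph{sharp} bound $f(x,y)\geq(x-y)^2\,C(y)$ with $C(y)=\ln\!\big(\tfrac{1/2+y}{1/2-y}\big)/(2y)$, which is the exact infimum of $f(x,y)/(x-y)^2$: the difference $g(x)=f(x,y)-(x-y)^2C(y)$ vanishes to second order at \emph{both} $x=y$ and $x=-y$, and one checks $g''\geq 0$ precisely on $\{|x|\geq\xi\}$ for some $\xi\leq|y|$, which together with the four boundary/derivative conditions at $\pm y$ forces $g\geq 0$. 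When $y$ has the Fermi--Dirac form (\ref{form_y}) with parameter $h=\beta e$, one computes $C(y)=h/\tanh(h/2)\geq\max(|h|,2)$, giving your desired bound in one line. So your sketch, if carried out, would prove a weaker statement (just $\geq|h|$) by a longer argument; the missing step is precisely what the paper's choice of $C(y)$ makes easy.
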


Coercive in this context means that
$H(\gamma,\gamma^0) \to \infty$ if
$\norm{\gamma-\gamma^0}_{\gS_2(\gH_\Lambda)} \to \infty$. This follows from (\ref{estim_below_H}) since $|D^0|\geq 1$.

\begin{proof}[Proof of Theorem \ref{prop_rel_entropy}]
First we prove that $H(\cdot,\gamma^0)$ is strongly continuous for the $\gS_2(\gH_\Lambda)$ topology. This is indeed a consequence of the following
\begin{lemma}\label{lem:upper_bound_HS}
Let $\gamma,\gamma'\in\cK$. Then we have for some constant $C$ (depending on $\Lambda$) and all $0\leq \eta\leq1$,
\begin{multline}
\left|H(\gamma,\gamma^0)-H(\gamma',\gamma^0) \right|\leq \frac{C}{\eta}\norm{\gamma-\gamma'}_{\gS_2(\gH_\Lambda)}\\+C\eta\left(\norm{\gamma-\gamma^0}_{\gS_2(\gH_\Lambda)}^2+\norm{\gamma'-\gamma^0}_{\gS_2(\gH_\Lambda)}^2\right).
\label{upper_bound_HS}
\end{multline}
\end{lemma}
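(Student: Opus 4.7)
Write $Q=\gamma-\gamma^0$, $Q'=\gamma'-\gamma^0$, $\Delta=\gamma-\gamma'$, $A_u=(1+2u\gamma^0)^{-1}$, and $C_u(\gamma)=(1-|u|)/(1+2u\gamma)$. By \eqref{spectrum_free}, $\|A_u\|_\infty\leq C_\Lambda$ uniformly in $u$, and as already noted in the excerpt, $0\leq C_u(\gamma)\leq 1$. The plan is to start from \eqref{def_H_int},
\[
H(\gamma,\gamma^0)-H(\gamma',\gamma^0)=2\int_{-1}^{1}\tr\bigl[A_u\bigl(QC_u(\gamma)Q-Q'C_u(\gamma')Q'\bigr)A_u\bigr]\,du,
\]
and expand the integrand as
\[
QC_u(\gamma)Q-Q'C_u(\gamma')Q'=\Delta\,C_u(\gamma)\,Q+Q'C_u(\gamma)\Delta+Q'\bigl(C_u(\gamma)-C_u(\gamma')\bigr)Q'.
\]

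The two ``linear-in-$\Delta$'' contributions are straightforward: Hölder in Schatten classes, using $\|A_u\|_\infty,\|C_u\|_\infty\leq C$, yields pointwise in $u$ a bound $C\|\Delta\|_{\gS_2}(\|Q\|_{\gS_2}+\|Q'\|_{\gS_2})$, whose integral over $u\in[-1,1]$ is absorbed into the final estimate. The ``quadratic'' term is the real obstacle. The resolvent identity gives
\[
C_u(\gamma)-C_u(\gamma')=-2u(1-|u|)(1+2u\gamma)^{-1}\Delta(1+2u\gamma')^{-1},
\]
and, using the crucial bound $\|(1+2u\gamma)^{-1}\|_\infty\leq(1-|u|)^{-1}$ valid on $\cK$ (since $1+2u\gamma\geq 1-|u|$) together with $\|Q'\|_\infty\leq 1$, one obtains after a cyclic rearrangement
\[
\bigl|\tr\bigl[A_u Q'(C_u(\gamma)-C_u(\gamma'))Q'A_u\bigr]\bigr|\leq\frac{C|u|\,\|\Delta\|_{\gS_2}\,\|Q'\|_{\gS_2}}{1-|u|}.
\]
This integrand is singular at $|u|=1$, which is why the parameter $\eta$ is needed.

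The key trick is to split the $u$-integral at $|u|=1-\eta$. On the inner region $\{|u|\leq 1-\eta\}$ the factor $(1-|u|)^{-1}$ is controlled by $\eta^{-1}$, yielding a contribution bounded by $C\eta^{-1}\|\Delta\|_{\gS_2}\|Q'\|_{\gS_2}$ after integration. On the outer region $\{|u|>1-\eta\}$ of Lebesgue measure $2\eta$ the resolvent bound is useless, so one discards it and uses instead the uniform estimate $\|C_u(\gamma)-C_u(\gamma')\|_\infty\leq 2$ combined with $\|Q'A_u^2Q'\|_{\gS_1}\leq\|A_uQ'\|_{\gS_2}^2\leq C\|Q'\|_{\gS_2}^2$, producing a contribution bounded by $C\eta\|Q'\|_{\gS_2}^2$. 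Adding the analogous bound obtained by exchanging $\gamma\leftrightarrow\gamma'$ gives
\[
|H(\gamma,\gamma^0)-H(\gamma',\gamma^0)|\leq \frac{C(\|Q\|_{\gS_2}+\|Q'\|_{\gS_2})}{\eta}\|\Delta\|_{\gS_2}+C\eta\bigl(\|Q\|_{\gS_2}^2+\|Q'\|_{\gS_2}^2\bigr).
\]

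The stated form then follows by distributing the prefactor $(\|Q\|_{\gS_2}+\|Q'\|_{\gS_2})$ by AM-GM, using the elementary bound $\|\Delta\|_\infty\leq 1$ on $\cK$ to convert any leftover $\|\Delta\|_{\gS_2}^2$ into $\|\Delta\|_{\gS_2}$ when needed, and absorbing constants into $C=C(\Lambda)$. The principal obstacle throughout is the tension between the pointwise bound $C_u\leq 1$ (which keeps each $F_u(\gamma)=\tr[A_uQC_uQA_u]$ finite) and the fact that $C_u(\gamma)-C_u(\gamma')$ itself contains two resolvents $(1+2u\gamma)^{-1}(1+2u\gamma')^{-1}$, each of which blows up like $(1-|u|)^{-1}$ at the endpoints; the truncation at $|u|=1-\eta$ is precisely what trades one singularity against the other.
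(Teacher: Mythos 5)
Your proof follows essentially the same route as the paper's: start from \eqref{def_H_int}, split the $u$-integral at $|u|=1-\eta$, use the resolvent identity together with the bound $\norm{(1+2u\gamma)^{-1}}\leq(1-|u|)^{-1}$ on the inner region, and trade the measure of the outer region against the uniform operator bounds there. The intermediate estimate you reach, namely
\[
|H(\gamma,\gamma^0)-H(\gamma',\gamma^0)|\leq \frac{C}{\eta}\norm{\gamma-\gamma'}_{\gS_2}\left(\norm{\gamma-\gamma^0}_{\gS_2}+\norm{\gamma'-\gamma^0}_{\gS_2}\right)+C\eta\left(\norm{\gamma-\gamma^0}_{\gS_2}^2+\norm{\gamma'-\gamma^0}_{\gS_2}^2\right),
\]
is correct, and it is all one needs to deduce strong $\gS_2$-continuity of $H(\cdot,\gamma^0)$ in Theorem \ref{prop_rel_entropy}.

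The last paragraph, however, does not work. You cannot ``distribute the prefactor by AM-GM'' and then ``convert a leftover $\norm{\Delta}_{\gS_2}^2$ into $\norm{\Delta}_{\gS_2}$'' using $\norm{\Delta}_\infty\leq 1$: the operator-norm bound controls the largest singular value, not the $\gS_2$-norm, so $\norm{\Delta}_{\gS_2}^2\leq\norm{\Delta}_{\gS_2}$ fails whenever $\norm{\Delta}_{\gS_2}>1$, which happens for generic $\gamma,\gamma'\in\cK$. In fact the literal statement \eqref{upper_bound_HS} cannot hold with a constant depending only on $\Lambda$: taking $\gamma'=\gamma^0$ and minimizing the right-hand side over $\eta\in(0,1]$ would give $H(\gamma,\gamma^0)\leq 2C\norm{\gamma-\gamma^0}_{\gS_2}^{3/2}$ for $\norm{\gamma-\gamma^0}_{\gS_2}\geq 1$, contradicting the coercivity $TH(\gamma,\gamma^0)\geq\norm{\gamma-\gamma^0}_{\gS_2}^2$ from \eqref{estim_below_H} once $\norm{\gamma-\gamma^0}_{\gS_2}$ is large. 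The extra factor $(\norm{\gamma-\gamma^0}_{\gS_2}+\norm{\gamma'-\gamma^0}_{\gS_2})$ in the first term is therefore unavoidable; the paper's own proof silently produces it as well (the decomposition of the integrand into terms $\Delta C_u Q$, $Q'C_u\Delta$, $Q'(C_u(\gamma)-C_u(\gamma'))Q'$ gives two Hilbert--Schmidt factors in each piece). You should simply state and prove the corrected bound above rather than try to force it into the form \eqref{upper_bound_HS}.
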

\begin{proof}
We use Formula \eqref{def_H_int} and split the integrals as follows: $$\int_{-1}^1=\int_{-1}^{-1+\eta}+\int_{-1+\eta}^{1-\eta}+\int_{1-\eta}^{1}.$$
We estimate
\begin{multline*}
\Bigg| \tr\Bigg(\int_{-1+\eta}^{1-\eta}du\bigg(\frac{1}{1+2u\gamma^0}(\gamma-\gamma^0)\frac{1-|u|}{1+2u\gamma}(\gamma-\gamma^0)\frac{1}{1+2u\gamma^0}\\
-\frac{1}{1+2u\gamma^0}(\gamma'-\gamma^0)\frac{1-|u|}{1+2u\gamma'}(\gamma'-\gamma^0)\frac{1}{1+2u\gamma^0}\Bigg)\Bigg|\leq \frac{C}{\eta}\norm{\gamma-\gamma'}_{\gS_2(\gH_\Lambda)}
\end{multline*}
using in particular
$$\frac{1-|u|}{1+2u\gamma}-\frac{1-|u|}{1+2u\gamma'}=2u\frac{1-|u|}{1+2u\gamma}(\gamma'-\gamma)\frac{1}{1+2u\gamma'}$$
and $0\leq (1+2u\gamma')^{-1}\leq \eta^{-1}$ as $\gamma'\in\cK$ and $-1+\eta\leq u\leq 1-\eta$. Similarly
$$\Bigg| \tr\int_{1-\eta}^1du\frac{1}{1+2u\gamma^0}(\gamma-\gamma^0)\frac{1-|u|}{1+2u\gamma}(\gamma-\gamma^0)\frac{1}{1+2u\gamma^0}\Bigg|\leq C\eta\norm{\gamma-\gamma^0}_{\gS_2(\gH_\Lambda)}^2.$$
The other terms are treated in the same way.
\end{proof}

Convexity of $\gamma \mapsto H(\gamma,\gamma_0)$ is a simple consequence of the integral representation (\ref{def_H_int}). In fact, the integrand is convex for any fixed $u\in [-1,1]$, since
\begin{align}\nonumber
\gamma & \mapsto (\gamma-\gamma^0) \frac1{1+2u\gamma}(\gamma-\gamma^0) \\ \nonumber &\qquad = \frac 1{(2u)^2} \left( 2u \gamma - 1 - 4u\gamma_0 + (1+2u\gamma_0)\frac 1{1+2u\gamma}(1+2u\gamma_0)\right)
\end{align}
is clearly convex.

Finally, we prove Formula \eqref{estim_below_H}. Consider the following function
$$f(x,y)=(\half+x)\left(\ln(\half+x)-\ln(\half+y)\right)+(\half-x)\left(\ln(\half-x)-\ln(\half-y)\right)$$
defined on $(-1/2,1/2)^2$. Minimizing over $x$ for fixed $y$, one
finds that $f(x,y)\geq (x-y)^2C(y)$ where
$C(y)=\ln\left(\frac{1/2+y}{1/2-y}\right)/(2y)$. If we write $y$ as
\begin{equation}
 y=\frac12\left(\frac{1}{1+e^h}-\frac{1}{1+e^{-h}} \right),
\label{form_y}
\end{equation}
we obtain
$C(y)=h\,\text{tanh}(h/2)^{-1}\geq\max(|h|,2)$.
Hence if $y$ takes the form \eqref{form_y}, we deduce
$$f(x,y)\geq \max\left\{(x-y)^2|h|,\, 2(x-y)^2\right\}.$$
Assume now that $X$ and $Y$ are self-adjoint operators acting on a Hilbert space $\gH$, with $-1/2\leq X,Y\leq 1/2$ and
$$Y=\frac12\left(\frac{1}{1+e^H}-\frac{1}{1+e^{-H}} \right)$$
for some $H$. By Klein's inequality \cite[p. 330]{Thirring}, one also has
\begin{equation}
 H(X,Y)=\tr f(X,Y)\geq \max\big\{\tr(X-Y)^2|H|\, ,\, 2\tr(X-Y)^2\big\}.
\label{Klein_inequality}
\end{equation}
This gives \eqref{estim_below_H}, taking $X=\gamma$ and $Y=\gamma^0$.
\end{proof}

\subsection{Existence of a minimizer and Debye screening}\label{sec:external_red}
Now we are able to define the \emph{reduced Bogoliubov-Dirac-Fock} energy at temperature $T=\beta^{-1}$. For this purpose, we introduce the Coulomb space
\begin{equation}
 \cC:=\{\rho \in \mathcal{S}'(\R^3) \ | \  D(\rho,\rho)<\ii\}
\label{def_cC}
\end{equation}
where
\begin{equation}
 D(f,g)=4\pi\int_{\R^3}|k|^{-2}\overline{\widehat{f}(k)}g(k)dk.
\label{def_Coulomb}
\end{equation}
We remark that the Fourier transform of $Q=\gamma-\gamma^0$ in an
$L^2$-function with support in $B(0,\Lambda)\times B(0,\Lambda)$.
Hence $Q(x,y)$ is a smooth kernel and $\rho_Q(x)=\tr_{\C^4}(Q(x,x))$
is a well defined function. Indeed, the map
$\gamma\in\cK\mapsto\rho_{\gamma-\gamma^0}\in L^2(\R^3)$ is continuous
for the topology of $\gS_2(\gH_\Lambda)$. It is easy to see that the
Fourier transform of $\rho_{\gamma-\gamma^0}$ is given by the formula
\begin{equation}
\widehat{\rho_{\gamma-\gamma^0}}(k)=\frac{1}{(2\pi)^{3/2}}\int_{\substack{|p+k/2|\leq\Lambda\\ |p-k/2|\leq\Lambda}}
\tr_{\C^4}\left[\widehat{(\gamma-\gamma^0)}(p+k/2,p-k/2)\right]dp.
\label{def_rho}
\end{equation}

We also define our variational set by
\begin{equation}
\cK_C:=\left\{\gamma\in\cK\ |\ \rho_{\gamma-\gamma^0}\in\cC\right\}.
\label{def_convex_set2}
\end{equation}
The \emph{reduced Bogoliubov-Dirac-Fock energy} reads
\begin{equation}\label{def_rBDF}
 \boxed{\cF^{\rm red}_T(\gamma) = T H(\gamma, \gamma^0)   - \alpha D(\nu,\rho_{\gamma - \gamma^0})+ \frac \alpha 2
 D(\rho_{\gamma - \gamma^0},\rho_{\gamma - \gamma^0})}
 \end{equation}
and it is well-defined on $\cK_C$ by Theorem \ref{prop_rel_entropy}. In \eqref{def_rBDF}, $\nu\in\cC$ is an external density creating an electrostatic potential $-\nu\ast1/|x|$. The number $\alpha>0$ is the \emph{fine structure constant}. The following is an easy consequence on Theorem \ref{prop_rel_entropy}:
\begin{thm}[{\bf Existence of a minimizer}]\label{thm_exists_rBDF} Assume $T>0$, $\alpha\geq0$ and $\nu\in\cC$. Then $\cF^{\rm red}_T$ satisfies
\begin{equation}
\forall\gamma\in\cK_C,\qquad \cF^{\rm red}_T(\gamma)\geq -\frac\alpha2 D(\nu,\nu)
\label{estim_below_red_BDF}
\end{equation}
hence it is bounded below on $\cK_C$. It has a \emph{unique minimizer} $\bar\gamma$ on $\cK_C$. The operator $\bar\gamma$ satisfies the self-consistent equation
\begin{equation}
\left\{\begin{array}{l}\displaystyle
\bar\gamma = \frac12\left(\frac1{1+e^{\beta D_{\bar\gamma}}}-\frac{1}{1+e^{-\beta D_{\bar\gamma}}} \right), \smallskip\\
D_{\bar\gamma} :=  D^0+\alpha(\rho_{\bar\gamma-\gamma^0}-\nu)\ast|\cdot|^{-1}.
\end{array}\right.
\label{SCF_red_BDF}
\end{equation}
\end{thm}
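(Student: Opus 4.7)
The plan is a standard direct-method argument. The key identity is the completion of the square
\[
\cF^{\rm red}_T(\gamma)=TH(\gamma,\gamma^0)+\tfrac{\alpha}{2}D(\rho_{\gamma-\gamma^0}-\nu,\rho_{\gamma-\gamma^0}-\nu)-\tfrac{\alpha}{2}D(\nu,\nu).
\]
Since $H(\gamma,\gamma^0)\geq 0$ by the coercivity bound \eqref{estim_below_H} and $D$ is non-negative, this immediately gives \eqref{estim_below_red_BDF}. For a minimizing sequence $\{\gamma_n\}\subset\cK_C$, boundedness of $\cF^{\rm red}_T(\gamma_n)$ then forces $TH(\gamma_n,\gamma^0)$ and $D(\rho_{\gamma_n-\gamma^0}-\nu,\rho_{\gamma_n-\gamma^0}-\nu)$ to be bounded; combining \eqref{estim_below_H} with $|D^0|\geq 1$ yields $\|\gamma_n-\gamma^0\|_{\gS_2(\gH_\Lambda)}$ bounded, and $\nu\in\cC$ gives $\rho_{\gamma_n-\gamma^0}$ bounded in $\cC$.

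Extracting subsequences, $\gamma_n-\gamma^0\wto Q$ in $\gS_2(\gH_\Lambda)$ and $\rho_{\gamma_n-\gamma^0}\wto \rho$ in $\cC$; the weak continuity of $\gamma\mapsto\rho_{\gamma-\gamma^0}$ recorded before \eqref{def_rho} identifies $\rho=\rho_Q$, and weak-$\gS_2$ convergence implies weak operator convergence so the constraint $-\half\leq\gamma_n\leq\half$ passes to $\bar\gamma:=\gamma^0+Q\in\cK_C$. Weak lower semi-continuity of $H(\cdot,\gamma^0)$ (Theorem~\ref{prop_rel_entropy}), lower semi-continuity of the Coulomb quadratic form under weak convergence in $\cC$, and weak continuity of the linear term $\gamma\mapsto D(\nu,\rho_{\gamma-\gamma^0})$ give $\cF^{\rm red}_T(\bar\gamma)\leq\liminf\cF^{\rm red}_T(\gamma_n)$, so $\bar\gamma$ minimizes. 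Uniqueness comes from strict convexity of $\cF^{\rm red}_T$: each integrand in \eqref{def_H_int} is strictly convex in $\gamma$ for every fixed $u\in(-1,1)$, as displayed in the proof of Theorem~\ref{prop_rel_entropy}, and strictness carries over to the integral; the remaining Coulomb and linear pieces are convex and linear respectively.

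Finally, for the Euler--Lagrange equation \eqref{SCF_red_BDF} I would use that $\cK_C$ is convex, so for any $\gamma\in\cK_C$ the curve $(1-t)\bar\gamma+t\gamma$ stays in $\cK_C$ for $t\in[0,1]$ and the right derivative at $t=0$ is non-negative. The Coulomb and external pieces contribute $\alpha D(\rho_{\bar\gamma-\gamma^0}-\nu,\rho_{\gamma-\bar\gamma})=\tr[(D_{\bar\gamma}-D^0)(\gamma-\bar\gamma)]$. For the relative entropy, using the scalar identity $\ln(\half-\gamma^0)-\ln(\half+\gamma^0)=\beta D^0$ one rewrites, on trace-class perturbations,
\[
TH(\gamma,\gamma^0)=\tr[D^0(\gamma-\gamma^0)]-TS(\gamma)+TS(\gamma^0),
\]
whose derivative at $\bar\gamma$ is $\tr\bigl[\bigl(D^0+T(\ln(\half+\bar\gamma)-\ln(\half-\bar\gamma))\bigr)(\gamma-\bar\gamma)\bigr]$. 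Justifying this differentiation for a general Hilbert--Schmidt perturbation is the main technical obstacle; I would handle it by approximating $\bar\gamma-\gamma^0$ by trace-class perturbations lying strictly inside $-\half<\gamma<\half$ (a density statement in the spirit of the commented-out lemma of the excerpt) and passing to the limit through the integral representation \eqref{def_H_int}. Adding the two derivatives, the variational inequality reads
\[
\tr\Bigl[\Bigl(D_{\bar\gamma}+T\bigl(\ln(\tfrac12+\bar\gamma)-\ln(\tfrac12-\bar\gamma)\bigr)\Bigr)(\gamma-\bar\gamma)\Bigr]\geq 0\quad\forall\gamma\in\cK_C,
\]
and since directions of both signs are available in the interior of $\cK_C$ the bracket must vanish, giving $T\ln\bigl(\tfrac{1/2+\bar\gamma}{1/2-\bar\gamma}\bigr)=-D_{\bar\gamma}$. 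This rearranges to \eqref{SCF_red_BDF} via the scalar identity $-\half\tanh(h/2)=\half\bigl((1+e^{h})^{-1}-(1+e^{-h})^{-1}\bigr)$.
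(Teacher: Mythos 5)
Your lower bound and existence arguments coincide with the paper's. Your uniqueness argument, however, takes a genuinely different route: you claim strict convexity of $H(\cdot,\gamma^0)$ from the integral representation \eqref{def_H_int} and conclude that $\cF^{\rm red}_T$ is strictly convex on $\cK_C$. The paper deliberately avoids this --- a footnote states strict convexity of $H$ ``can indeed be proved but we do not need that here'' --- and instead uses only convexity in $\gamma$ together with strict convexity in the \emph{density} $\rho_{\gamma-\gamma^0}$: all minimizers share the same density, hence the same operator $D_{\bar\gamma}$ in \eqref{SCF_red_BDF}, and the self-consistent equation then pins down $\bar\gamma$ uniquely. Your route is self-contained but requires making the strict operator convexity of inversion precise in the Hilbert--Schmidt setting; the paper's route gets uniqueness essentially for free once the Euler--Lagrange equation is in hand.

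The genuine gap is in your derivation of that Euler--Lagrange equation. You write ``since directions of both signs are available in the interior of $\cK_C$ the bracket must vanish,'' but this presumes that $\bar\gamma$ lies in the relative interior of the constraint set, i.e.\ that $\pm\half\notin\sigma(\bar\gamma)$. Without establishing this, your first-order condition remains the one-sided variational inequality and does not yield the equation: if $\bar\gamma$ had an eigenvalue at $\half$, the only admissible perturbations of that eigenvalue go inward, and the bracket need not vanish on the corresponding subspace. The paper supplies the missing step in one sentence: the derivative of the relative entropy with respect to an eigenvalue $x$ of $\gamma$ involves $\ln\frac{1/2+x}{1/2-x}$, which diverges as $x\to\pm\half$, so a minimizer cannot have spectrum touching $\pm\half$ (moving such an eigenvalue inward would strictly lower the free energy while the Coulomb and external terms contribute only a finite derivative). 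You should make this explicit; it is also what legitimizes the differentiation of the relative entropy that you flag as ``the main technical obstacle,'' since that differentiation is only valid once $\bar\gamma$ is known to stay uniformly away from $\pm\half$.
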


\begin{remark}
When $T=0$, a similar result was proved in \cite[Thm 3]{HLS2}, but
there might be no uniqueness in this case.
\end{remark}

\begin{remark}
If there is no external field, $\nu=0$, we recover that the optimal state is $\gamma-\gamma^0=0$, and its energy is zero, by \eqref{estim_below_red_BDF}.
\end{remark}

\begin{proof}[Proof of Theorem \ref{thm_exists_rBDF}]
  Eq.~\eqref{estim_below_red_BDF} is an obvious consequence of
  positivity of the relative entropy $H$ and positive definiteness of
  $D(\cdot,\cdot)$.  The existence of a minimizer is obtained by noticing that
  $\cF^{\rm red}_T$ is weakly lower semi-continuous for the topology
  of $\gS_2(\gH_\Lambda)$ and $\cC$, by Theorem
  \ref{prop_rel_entropy}. As $\cF^{\rm red}_T$ is convex\footnote{It
    can indeed be proved that $H(\cdot,\gamma^0)$ is strictly convex
    but we do not need that here.} and strictly convex with respect to
  $\rho_{\gamma-\gamma^0}$, we deduce that all the minimizers share
  the same density. Next we notice that
  $\pm1/2\notin\sigma(\bar\gamma)$ since the derivative of the
  relative entropy with respect to variations of an eigenvalue is
  infinite at these two points. Hence $\bar\gamma$ does not saturate
  the constraint and it is a solution of Eq.~\eqref{SCF_red_BDF}. This
  \emph{a fortiori} proves that $\bar\gamma$ is unique, since $D_{\bar
    \gamma}$ depends only on the density $\rho_{\bar\gamma-\gamma^0}$.
\end{proof}

Now we provide some interesting properties of any solution of Eq.~\eqref{SCF_red_BDF}, thus in particular of our minimizer $\bar\gamma$.
\begin{thm}[{\bf Debye Screening}]\label{thm_Debye_rBDF} Assume $T>0$,
  $\alpha>0$ and $\nu\in\cC\cap L^1(\R^3)$. Any $\gamma\in\cK$ that
  solves Eq.~\eqref{SCF_red_BDF} is a trace-class perturbation of
  $\gamma^0$, i.e., $\gamma-\gamma^0\in \gS_1(\gH_\Lambda)$. Its
  charge density $\rho_{\gamma-\gamma^0}$ is an $L^1(\R^3)$ function
  which satisfies
\begin{equation}
\int_{\R^3}\rho_{\gamma-\gamma^0}=\int_{\R^3}\nu\qquad\text{and}\qquad \big(\rho_{\gamma-\gamma^0}-\nu\big)\ast\frac{1}{|x|}\in L^1(\R^3).
\label{Debye}
\end{equation}
\end{thm}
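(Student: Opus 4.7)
The plan is a Matsubara/resolvent expansion of the self-consistent equation \eqref{SCF_red_BDF}, followed by a Fourier-space analysis of the effective potential. Set $Q:=\gamma-\gamma^0$ and $V:=\alpha(\rho_Q-\nu)\ast|\cdot|^{-1}$, so that $D_\gamma=D^0+V$. Using the Mittag--Leffler expansion
\begin{equation*}
\tanh(\beta z/2)=\frac{2}{\beta}\sum_{n\in\Z}\frac{z}{z^2+\omega_n^2},\qquad \omega_n:=\frac{(2n+1)\pi}{\beta},
\end{equation*}
one rewrites $\gamma=-\tfrac{1}{2}\tanh(\beta D_\gamma/2)$ as
\begin{equation*}
Q=-\frac{1}{2\beta}\sum_{n\in\Z}\bigl((R_n^\gamma-R_n^0)+(R_n^\gamma-R_n^0)^*\bigr),\qquad R_n^{\bullet}:=(D^{\bullet}-i\omega_n)^{-1},
\end{equation*}
and the resolvent expansion $R_n^\gamma-R_n^0=-R_n^0\,V\,R_n^0+R_n^0\,V\,R_n^\gamma\,V\,R_n^0$ isolates the linear response from a quadratic remainder.

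First I would prove the trace-class statement. On $\gH_\Lambda$ we have $|D^0|\geq 1$, hence $\|R_n^0\|\leq(1+\omega_n^2)^{-1/2}$ and $\|R_n^0\|_{\gS_2}^2\lesssim \Lambda^3(1+\omega_n^2)^{-1}$ by a direct Fourier computation. Since $\rho_Q-\nu\in\cC$ by Theorem \ref{thm_exists_rBDF} and $\nu\in L^1$, combined with the UV cutoff, $V$ is bounded on $\gH_\Lambda$, and Kato--Seiler--Simon type estimates give Schatten-class bounds for each product $R_n^0 V R_n^0$ and $R_n^0 V R_n^\gamma V R_n^0$ which sum absolutely in $n$ thanks to $\sum_n(1+\omega_n^2)^{-1}<\infty$. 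This yields $Q\in\gS_1(\gH_\Lambda)$; since for self-adjoint $Q$ one has $\|\rho_Q\|_{L^1(\R^3)}\leq\|Q\|_{\gS_1}$ (via the spectral decomposition of $Q$), this also gives $\rho_Q\in L^1(\R^3)$.

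Debye screening then comes from the linear part of the expansion. Taking densities one finds $\widehat{\rho_{Q^{(1)}}}(k)=-\widehat{\chi_T}(k)\widehat{V}(k)$, where $\widehat{\chi_T}$ is the relativistic polarization function at temperature $T$, given by an explicit (UV-cut off) Matsubara sum. The crucial feature is $\widehat{\chi_T}(0)>0$ for $T>0$ (in contrast to $\widehat{\chi_0}(0)=0$ at zero temperature). Substituting $\widehat{V}(k)=\alpha(4\pi/|k|^2)\widehat{(\rho_Q-\nu)}(k)$ into the self-consistent equation yields, modulo a quadratic remainder $\widehat{N}(k)$ controlled by $\|Q\|_{\gS_2}^2$,
\begin{equation*}
\bigl(|k|^2+4\pi\alpha\,\widehat{\chi_T}(k)\bigr)\,\widehat{(\rho_Q-\nu)\ast|\cdot|^{-1}}(k)=-4\pi\,\widehat{\nu}(k)+\widehat{N}(k).
\end{equation*}
The denominator is bounded below uniformly by the Debye mass $m_D^2:=4\pi\alpha\,\widehat{\chi_T}(0)>0$, so the resulting Fourier multiplier is of Yukawa type; combined with $\nu\in L^1\cap\cC$, standard Bessel-potential estimates give $(\rho_Q-\nu)\ast|\cdot|^{-1}\in L^1(\R^3)$. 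The absence of a $1/|k|^2$ singularity at $k=0$ on the right-hand side forces $\widehat{(\rho_Q-\nu)}(0)=0$, hence $\int_{\R^3}\rho_Q=\int_{\R^3}\nu$.

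The main obstacle is the rigorous control of the nonlinear remainder $\widehat{N}(k)$: one must show that the quadratic and higher-order contributions retain sufficient regularity at $k=0$ and decay at infinity so as not to spoil the screening of the linear response nor the $L^1$ decay of the effective potential. I would close the argument by a contraction on $\rho_Q-\nu$ in a norm combining $\cC$ and the $L^1$ norm of the effective potential, using as input the a priori $\gS_2$ bound coming from the coercivity \eqref{estim_below_H} of Theorem \ref{prop_rel_entropy} applied to the minimizer $\bar\gamma$.
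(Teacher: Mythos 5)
Your overall decomposition (a resolvent expansion of the self-consistent equation into a linear-response part plus a quadratic remainder, followed by Fourier-multiplier analysis of the linear part and a bootstrap on the remainder) is the same as the paper's. The paper uses Duhamel's formula for $e^{\beta D_\gamma}$ rather than a Matsubara sum, but these are equivalent devices and the key quantity $\widehat{\chi}_T$ you identify (the paper's $C(|k|)$, with $C(0)>0$ at positive $T$) plays exactly the role you describe: the Schwinger--Dyson--like relation $(|k|^2+4\pi\alpha\widehat{\chi}_T(k))\widehat{V}(k)=4\pi(-\widehat\nu+\widehat{\rho_B})(k)$, and the mass gap at $k=0$ give both the charge sum rule and the $L^1$ decay of the effective potential.

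However, your first step is not correct: you cannot establish $Q=\gamma-\gamma^0\in\gS_1(\gH_\Lambda)$ before the screening analysis, by a direct estimate on the Matsubara/Duhamel terms. The obstruction is that, a priori, one only knows $V=\alpha(\rho_Q-\nu)*|\cdot|^{-1}\in L^6(\R^3)$ (from $\rho_Q-\nu\in\cC$), and the Kato--Seiler--Simon inequality with $V\in L^6$ only puts the linear-response term $R_n^0\,V\,R_n^0$ in $\gS_6$, not in $\gS_1$. Summing over $n$ improves only the norm, not the Schatten index, so the linear piece stubbornly stays in $\gS_6$ with this input. (Also, the intermediate estimate $\|R_n^0\|_{\gS_2}^2\lesssim\Lambda^3(1+\omega_n^2)^{-1}$ is false: with the Fourier cut-off, $R_n^0$ is a multiplication operator in momentum space and hence not Hilbert--Schmidt; what is finite is the $L^2$ norm of its symbol, not its $\gS_2$ norm.) The correct order of operations --- and what the paper does --- is the opposite of what you propose: one first extracts the Fourier-multiplier equation for $\widehat{\rho}$ and $\widehat{V}$, uses the positivity of the polarization at $k=0$ and its vanishing with suitable smoothness at $2\Lambda$ to show the multipliers are $L^1$, and then runs a bootstrap ($\rho_B\in L^3\Rightarrow V\in L^3\Rightarrow B\in\gS_2\Rightarrow\rho_B\in L^2\Rightarrow V\in L^2\Rightarrow B\in\gS_1\Rightarrow\rho_B\in L^1\Rightarrow V\in L^1$) that simultaneously upgrades the Schatten index of the remainder $B$ and the $L^p$ regularity of $V$; the trace-class property of $\gamma-\gamma^0$ and the $L^1$ bound on $\rho_{\gamma-\gamma^0}$ come out at the end, as a \emph{consequence} of screening, not as an input to it. Relatedly, the contraction-in-a-weighted-norm you propose to close the remainder is not needed (and is not what the paper does): the solution $\gamma$ already exists, so a regularity bootstrap suffices, and all the self-consistent information enters through $V$ in the Duhamel/Matsubara expansion. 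Finally, your appeal to ``standard Bessel-potential estimates'' is too quick: because of the UV cut-off, $C(|k|)$ is compactly supported, so $1/(|k|^2+\alpha C(|k|))$ is \emph{not} of Yukawa type at large $|k|$; the $L^1$ property of the corresponding kernel requires the specific smoothness and boundary vanishing at $|k|=2\Lambda$ established in the paper's Proposition~\ref{prop_C} via the Pauli--Rose representation.
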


This result implies that the particles arrange themselves such that
the total effective potential $\big(\rho_{\gamma-\gamma^0} -
\nu\big)\ast 1/|x|$ has a decay much faster than $1/|x|$. This implies that
the nuclear charge of the external sources is completely screened.

The proof of Theorem \ref{thm_Debye_rBDF} is lengthy and is given
later in Section \ref{sec:proof_Debye}.

\section{The Bogoliubov-Dirac-Fock free energy }
\subsection{Definition of the free vacuum}
When the exchange term is not neglected, the free vacuum is no longer
described by the operator $\gamma^0$ introduced in the previous
section. Instead it is another translation-invariant operator
$\tilde\gamma^0$ that solves a self-consistent equation. Following
ideas from \cite{HLSo}, we define in this section $\tilde\gamma^0$ as
the (unique) minimizer of the free energy per unit volume.  We
consider translation-invariant operators $\gamma=\gamma(p)$ acting on
$\gH_\Lambda$ and such that $-1/2\leq\gamma\leq 1/2$ which is
obviously equivalent to $-1/2\leq\gamma(p)\leq 1/2$, for a.e. $p\in
B(0,\Lambda)$, in the sense of $\C^4\times\C^4$ hermitian matrices.
The free energy per unit volume of such a translation-invariant
operator $\gamma$ at temperature $T$ is given by \cite{HLSo}
\begin{multline}
\cT_T(\gamma)=\frac{1}{(2\pi)^{3}}\bigg(\int_{B(0,\Lambda)}\tr_{\C^4}[
D^0(p)\gamma(p)]dp\\-\frac{\alpha}{(2\pi)^2}\iint_{B(0,\Lambda)^2}\frac{\tr_{\C^4}
[\gamma(p)\gamma(q)]}{|p-q|^2}dp\,dq
-TS(\gamma)\bigg)
\label{def_fn_F}
\end{multline}
where the entropy is defined as
$$S(\gamma)=-\int_{B(0,\Lambda)}\!\!\tr_{\C^4}\big[
\left(\half+\gamma(p)\right)\ln\left(\half+\gamma(p)\right)+\left(\half-\
\gamma(p)\right)\ln\left(\half-\gamma(p)\right)\big]dp.$$ The free
energy is defined on the convex set of matrix-valued functions, such
that, for all $p\in B(0,\Lambda)$,  $\gamma(p)$ is a hermitian $4\times 4$ matrix, i.e.
\begin{multline}\label{defcA}
\cA:=\big\{\gamma :B(0,\Lambda) \to M^4 \, | \,
\gamma(p)^*=\gamma(p),\\ -1/2\leq\gamma(p)\leq1/2 \ \text{for all $p\in B(0,\Lambda)$}\big\}.
\end{multline}

\begin{thm}[{\bf The free vacuum at temperature $T$}]\label{thm_free_case} For all $T>0$
and all $0\leq\alpha<4/\pi$, the free energy per unit volume $\cT_T$ in (\ref{def_fn_F}) has a
unique minimizer $\tilde\gamma^0$ on $\cA$. It is a solution of the self-consistent
equation
\begin{equation}
\left\{\begin{array}{l}
\displaystyle \tilde\gamma^0=\frac12\left( \frac{1}{1+e^{ \beta D_{\tilde\gamma^0}}}-\frac{1}{1+e^{-\beta D_{\tilde\gamma^0}}}\right) \\
D_{\tilde\gamma^0}=D^0-\alpha\frac{\tilde\gamma^0(x,y)}{|x-y|}.
\end{array}\right.
\label{SCF_free}
\end{equation}
Furthermore, $\tilde\gamma^0$ has the form
\begin{equation}
 \tilde\gamma^0(p)=f_1(|p|)\alp\cdot p+f_0(|p|)\beta
\label{form_gamma_0}
\end{equation}
with $f_0,f_1\leq0$ a.e. on $B(0,\Lambda)$ and $D_{\tilde\gamma^0}$ satisfies
\begin{equation}
 |D_{\tilde\gamma^0}|\geq|D^0|.
\label{bound_below_free}
\end{equation}
\end{thm}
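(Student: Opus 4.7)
The plan is to apply the direct method of the calculus of variations, extract the Euler--Lagrange equation, and then use rotational symmetry together with a sign analysis to deduce the matrix form and the operator bound.

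\textbf{Existence and uniqueness.} The only potentially dangerous contribution to $\cT_T$ is the exchange term; written in position space it is (up to a positive constant) $\alpha\iint \tr_{\C^4}|\gamma(x,y)|^2/|x-y|\,dx\,dy$, which Kato's inequality $|x|^{-1}\leq(\pi/2)|\nabla|$ bounds by $(\alpha\pi/2)\tr(|\nabla|\gamma^2)$. Combined with $|\gamma|\leq 1/2$, this is controlled by the kinetic energy plus entropy whenever $\alpha<4/\pi$, so $\cT_T$ is bounded below on $\cA$. A minimizing sequence is bounded in $L^\infty(B(0,\Lambda),M^4)$ by the pointwise constraint, hence weak-$*$ convergent up to a subsequence to some $\tilde\gamma^0\in\cA$; the kinetic term passes to the limit, the exchange is continuous on $L^\infty$-bounded sets because $|p-q|^{-2}\in L^1_{\rm loc}(\R^3)$ and $B(0,\Lambda)$ is bounded, and $-S$ is pointwise convex in $\gamma(p)$ hence weakly lsc. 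For uniqueness I would combine the strict pointwise convexity of $-S$ (whose Hessian is $\geq 4$) with the Kato bound on the exchange Hessian to get strict convexity of $\cT_T$ on $\cA$ when $\alpha<4/\pi$.

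\textbf{Self-consistent equation.} Because $-S$ has infinite slope at $\pm 1/2$, $\tilde\gamma^0$ is separated from the constraint and admits two-sided variations. Setting the G\^ateaux derivative of $\cT_T$ to zero yields
$$D^0-\alpha\frac{\tilde\gamma^0(x,y)}{|x-y|}+\frac{1}{\beta}\ln\frac{\tfrac12+\tilde\gamma^0}{\tfrac12-\tilde\gamma^0}=0,$$
which one rearranges to the Fermi--Dirac form \eqref{SCF_free}.

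\textbf{Matrix form, signs, and the bound $|D_{\tilde\gamma^0}|\geq|D^0|$.} $\cT_T$ is invariant under the joint action $\gamma(p)\mapsto U_R\gamma(R^{-1}p)U_R^{-1}$ of every rotation $R\in SO(3)$ with its spinor lift $U_R$, and under parity and charge conjugation (all leaving $D^0$, the Coulomb kernel and the entropy invariant). Uniqueness forces $\tilde\gamma^0$ to inherit these symmetries, and the only self-adjoint $4\times 4$-matrix-valued functions on $B(0,\Lambda)$ with this invariance take the form \eqref{form_gamma_0}. Then $D_{\tilde\gamma^0}$ inherits the decomposition $D_{\tilde\gamma^0}(p)=A(|p|)\alp\cdot p+B(|p|)\beta$; a continuity argument in $\alpha$ from $\alpha=0$ (where $A=B=1$) gives $A,B>0$, and the identity $\tilde\gamma^0=-\tanh(\beta|D_{\tilde\gamma^0}|/2)\,D_{\tilde\gamma^0}/(2|D_{\tilde\gamma^0}|)$ then forces $f_0,f_1\leq 0$. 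Finally, the Fourier symbol of the exchange operator $\tilde\gamma^0(x,y)/|x-y|$ is the convolution of $\tilde\gamma^0(q)$ with the positive kernel $C/|p-q|^2$ and inherits the same decomposition: its $\beta$-coefficient is $\leq 0$ immediately (from $f_0\leq 0$), and its $\alp\cdot p$-coefficient is $\leq 0$ by the elementary angular identity $\int_{-1}^1 u(r^2+s^2-2rsu)^{-1}\,du\geq 0$ (which reduces to $\mathrm{arctanh}(y)\geq y$ for $y=2rs/(r^2+s^2)\in(0,1]$). Subtracting $\alpha$ times this exchange symbol from $D^0$ thus produces coefficients $\geq 1$ on both $\alp\cdot p$ and $\beta$, yielding $|D_{\tilde\gamma^0}(p)|^2\geq|p|^2+1=|D^0(p)|^2$. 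The main obstacle will be the strict-convexity/uniqueness step, which requires exploiting the sharp constant in Kato's inequality together with the hypothesis $\alpha<4/\pi$; the symmetry reduction to \eqref{form_gamma_0}, the sign analysis, and the final operator bound are then all fairly direct.
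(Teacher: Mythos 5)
The weak point in your proposal is the uniqueness step, and it is a genuine gap rather than a technicality. You propose to show $\cT_T$ is strictly convex on $\cA$ by comparing the Hessian of $-TS$, which is pointwise $\geq 4T$, against the Hessian of the exchange term after Kato's inequality, which is of size $\alpha\pi|p|/4$. But $4T$ is a constant while $\alpha\pi|p|/4$ grows up to $\alpha\pi\Lambda/4$; for fixed $\alpha<4/\pi$ and $T$ small (or $\Lambda$ large) the difference is negative on a large part of $B(0,\Lambda)$, so $\cT_T$ is \emph{not} convex on $\cA$, and the condition $\alpha<4/\pi$ alone does not repair this. This matters downstream: your derivation of the matrix form \eqref{form_gamma_0} via symmetry relies on uniqueness to transfer the $SO(3)\times$parity$\times$charge-conjugation invariance of the functional to the minimizer, so without uniqueness that step is also unsupported. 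Your continuity-in-$\alpha$ argument for $A,B>0$ is a further soft spot: one would need to show the minimizer varies continuously in $\alpha$ and that $A,B$ cannot cross zero, neither of which is automatic.

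The paper takes a different and gap-free route. It first minimizes $\cT_T$ on the restricted convex set $\cB\subset\cA$ in which the ansatz $\gamma(p)=f_1(|p|)\alp\cdot p+f_0(|p|)\beta$ with $f_0,f_1\le0$ is built in. On $\cB$, existence is by weak-$*$ compactness, the sign constraints are shown to be strictly inactive (Lemmas 3--5), and the Euler--Lagrange equation together with the explicit symbol computation of $D_{\tilde\gamma^0}$ gives $d_0,d_1\ge1$, hence $|D_{\tilde\gamma^0}(p)|\ge|D^0(p)|\ge|p|$ directly, with no continuity argument. Only then is global uniqueness on $\cA$ proved, and crucially it does \emph{not} rest on convexity: the identity $\cT_T(\gamma)-\cT_T(\tilde\gamma^0)=TH(\gamma,\tilde\gamma^0)-(\text{exchange difference})$ combined with Klein's inequality gives
$$TH(\gamma,\tilde\gamma^0)\ge(2\pi)^{-3}\int_{B(0,\Lambda)}\tr_{\C^4}|D_{\tilde\gamma^0}(p)|\,(\gamma(p)-\tilde\gamma^0(p))^2\,dp,$$
a $|p|$-weighted lower bound that the uniform Hessian bound $4T$ cannot see. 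Since $|D_{\tilde\gamma^0}|\ge|p|$, Kato's inequality absorbs the exchange term with the factor $1-\pi\alpha/4>0$, yielding $\cT_T(\gamma)>\cT_T(\tilde\gamma^0)$ for all $\gamma\neq\tilde\gamma^0$ in $\cA$. The essential idea you are missing is precisely this: the coercivity of the relative entropy around the specific Fermi--Dirac state $\tilde\gamma^0$ is much stronger than the infimum of the local curvature, because $\tilde\gamma^0(p)$ approaches $\pm1/2$ exactly when $|D_{\tilde\gamma^0}(p)|$ is large, and this is what makes the $\alpha<4/\pi$ threshold (rather than a $T$- and $\Lambda$-dependent one) suffice.
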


Here and in the following, we shall identify operators with their integral kernels for simplicity of the notation. That is, the last term in the second line of
\eqref{SCF_free} denotes the operator  with integral kernel
given by $ \frac{(\gamma-\tilde\gamma^0)(x,y)}{|x-y|}$, where
$(\gamma-\tilde\gamma^0)(x,y)$ is the integral kernel of the
translation-invariant operator $\gamma -\tilde\gamma^0$ (it is a function of $x-y$).

\begin{remark}
  The assumption $\alpha < 4/\pi$ guarantees that the functional
  \eqref{def_fn_F} is bounded from below, independently of the UV
  cutoff $\Lambda$, which is arbitrary in this paper. This is a
  consequence of Kato's inequality. For $\alpha>4/\pi$ this is not the
  case \cite{CIL,noebsi}.

  For comparison, we note that in the \emph{non-interacting} case $\alpha =
  0$, the functions $f_1(|p|)$ and $f_0(|p|)$ appearing in Theorem
  \ref{thm_free_case} are given by
$$f_1(|p|)= f_0(|p|) = \frac 1{2E(p)} \left( \frac{1}{1+e^{ \beta E(p)}}-\frac{1}{1+e^{-\beta E(p)}} \right).$$
\end{remark}

A similar result was proved in the zero temperature case in
\cite{HLSo}. As in \cite{HLSo}, it is possible to justify the
introduction of $\cT_T$ by a thermodynamic limit procedure. The proof
of Theorem \ref{thm_free_case} is given in Section
\ref{sec:proof_free_case}.

Like for the reduced case, we have that
$$\sigma(\tilde{\gamma}^0)\subset\left[-\frac12+\epsilon,-\epsilon\right]\cup\left[\epsilon,\frac12-\epsilon\right]$$
for some $\epsilon>0$. This can be seen from \eqref{bound_below_free}
and the fact that $D_{\tilde{\gamma}^0}$ is a bounded operator on
$\gH_\Lambda$ due to the presence of the ultraviolet cut-off. Notice
also that we have formally $\rho_{\tilde{\gamma}^0}\equiv0$ by
\eqref{form_gamma_0}, as in \eqref{rho_vanishes}.

\subsection{The external field case}\label{sec:external}
As in Section \ref{sec:external_red}, one can consider the Bogoliubov-Dirac-Fock energy with an external field. It is formally obtained by subtracting the infinite free energy of the free vacuum at temperature $T>0$ from the free energy of our state $\gamma$. This procedure can be justified like in \cite{HLSo} by a thermodynamic limit procedure.
Using the same notation as in Section~\ref{sec:external_red}, the \emph{Bogoliubov-Dirac-Fock free energy} reads
\begin{multline}\label{BDF-functional}
 \cF_T(\gamma) = T H(\gamma, \tilde\gamma^0)   - \alpha D(\nu, \rho_{\gamma-\tilde\gamma^0}) + \frac \alpha 2
 D(\rho_{\gamma-\tilde\gamma^0},\rho_{\gamma-\tilde\gamma^0})\\
 - \frac \alpha 2  \iint \frac{\tr_{\C^4}|(\gamma - \tilde\gamma^0)(x,y)|^2}{|x-y|}dx dy,
 \end{multline}
where $H$ is the relative entropy defined like in Section \ref{sec:def_H}.
Like for the reduced case, we see that the functional $\cF_T$ is well-defined on the following convex set
\begin{equation}
\tilde\cK_C:=\left\{\gamma\in\cB(\gH_\Lambda)\ |\ \gamma^*=\gamma,\ -\frac{1}{2}\leq\gamma\leq \frac{1}{2},\ \gamma-\tilde\gamma^0\in\gS_2(\gH_\Lambda),\ \rho_{\gamma-\tilde\gamma^0}\in\cC\right\}.
\label{def_convex_set2_BDF}
\end{equation}
Note that although the function $\gamma\mapsto H(\gamma, \tilde\gamma^0)$ is
convex, $\cF_T$ is not a convex functional because of the
presence of the exchange term. This is of course a great obstacle in
proving the existence of a minimizer, and we have to leave this as an open problem. Following the method of Theorem \ref{prop_rel_entropy}, we shall show that
\begin{equation}
\forall\gamma\in\tilde\cK_C,\qquad TH(\gamma,\tilde\gamma^0)\geq \tr\left(|D_{\tilde\gamma^0}|(\gamma-\tilde\gamma^0)^2\right).
\label{estim_below_H_BDF}
\end{equation}
With the aid of this inequality we can prove the
\begin{thm}[{\bf Minimizer in External Field}]\label{thm_prop_full}Assume that $0\leq\alpha<4/\pi$ and that $T>0$. We have
 \begin{equation}
\forall\gamma\in\tilde\cK_C,\qquad \cF_T(\gamma)\geq -\frac\alpha2 D(\nu,\nu)
\label{estim_below_BDF}
\end{equation}
and hence $\cF_T$ is bounded below on $\tilde\cK_C$.

Assume that $\gamma\in\tilde\cK_C$ is a minimizer of $\cF_T$. Then it satisfies the self-consistent equation
\begin{equation}
\left\{\begin{array}{l}\displaystyle
\gamma = \frac12\left(\frac1{1+e^{\beta D_{\gamma}}}-\frac{1}{1+e^{-\beta D_{\gamma}}} \right), \smallskip\\
D_{\gamma} :=  D_{\tilde\gamma^0}+\alpha(\rho_{\gamma-\gamma^0}-\nu)\ast|\cdot|^{-1}-\alpha\frac{(\gamma-\tilde\gamma^0)(x,y)}{|x-y|}
\end{array}\right.
\label{SCF_BDF}
\end{equation}
with $D_{\tilde\gamma^0}$ defined in (\ref{SCF_free}).
It is unique when
\begin{equation}
0\leq\alpha\frac\pi4\left\{1-\alpha\left(\frac\pi2 \sqrt{\frac{\alpha/2}{1-\alpha\pi/4}}+\pi^{1/6}2^{11/6}\right)D(\nu,\nu)^{1/2}\right\}^{-1}\leq1.
\label{condition_uniqueness}
\end{equation}
\end{thm}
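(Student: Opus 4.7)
The plan is to tackle the three assertions of the theorem in turn.

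For the lower bound \eqref{estim_below_BDF}, the first task is to establish \eqref{estim_below_H_BDF} by adapting the Klein-inequality argument at the end of the proof of Theorem \ref{prop_rel_entropy}. Since Theorem \ref{thm_free_case} guarantees that $\tilde\gamma^0$ has the Fermi-Dirac form $\frac{1}{2}(1/(1+e^{\beta D_{\tilde\gamma^0}})-1/(1+e^{-\beta D_{\tilde\gamma^0}}))$, Klein's inequality \eqref{Klein_inequality} applied with $H=D_{\tilde\gamma^0}$ yields $TH(\gamma,\tilde\gamma^0)\geq\tr(|D_{\tilde\gamma^0}|(\gamma-\tilde\gamma^0)^2)$. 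I would then complete the square in the direct Coulomb term to rewrite
$$\cF_T(\gamma) = TH(\gamma,\tilde\gamma^0) + \frac{\alpha}{2}D(\rho_{\gamma-\tilde\gamma^0}-\nu,\rho_{\gamma-\tilde\gamma^0}-\nu) - \frac{\alpha}{2}D(\nu,\nu) - \frac{\alpha}{2}\iint\frac{\tr_{\C^4}|(\gamma-\tilde\gamma^0)(x,y)|^2}{|x-y|}dx\,dy.$$
Viewing the kernel $Q(x,y):=(\gamma-\tilde\gamma^0)(x,y)$ as an $L^2(\R^3\times\R^3)$-function and applying Kato's inequality $1/|x_1-x_2|\leq (\pi/2)|p_{x_1}|$ on the two-particle space, together with $|p|\leq|D^0|$, gives
$$\iint\frac{\tr_{\C^4}|Q(x,y)|^2}{|x-y|}dx\,dy \leq \frac{\pi}{2}\tr(|D^0|Q^2)$$
for self-adjoint $Q\in\gS_2(\gH_\Lambda)$. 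Combined with \eqref{bound_below_free} and the assumption $\alpha<4/\pi$, the relative entropy thus absorbs the exchange term, and \eqref{estim_below_BDF} follows.

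For the self-consistent equation \eqref{SCF_BDF}, I would first exclude $\pm 1/2\in\sigma(\gamma)$ for a minimizer by the standard argument that perturbing an eigenvalue inward from $\pm 1/2$ produces a derivative $\mp\infty$ for the entropy while all other terms contribute finite derivatives. Once $\pm 1/2$ is excluded, $H(\gamma,\tilde\gamma^0)$ admits the log form \eqref{def_H_log} in a neighborhood of $\gamma$, and the Euler-Lagrange equation reads
$$T\big[\ln(\tfrac{1}{2}+\gamma)-\ln(\tfrac{1}{2}-\gamma)-\ln(\tfrac{1}{2}+\tilde\gamma^0)+\ln(\tfrac{1}{2}-\tilde\gamma^0)\big]+\alpha(\rho_{\gamma-\tilde\gamma^0}-\nu)\ast\tfrac{1}{|\cdot|}-\alpha\tfrac{(\gamma-\tilde\gamma^0)(x,y)}{|x-y|}=0.$$
Using the identity $-\beta D_{\tilde\gamma^0}=\ln(\tfrac{1}{2}+\tilde\gamma^0)-\ln(\tfrac{1}{2}-\tilde\gamma^0)$, which follows directly from \eqref{SCF_free}, together with the fact $\rho_{\tilde\gamma^0}\equiv 0$ (so that $\rho_{\gamma-\tilde\gamma^0}=\rho_{\gamma-\gamma^0}$), the equation rearranges to $\ln(\tfrac{1}{2}+\gamma)-\ln(\tfrac{1}{2}-\gamma)=-\beta D_\gamma$, which inverts to the Fermi-Dirac expression in \eqref{SCF_BDF}.

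Uniqueness under \eqref{condition_uniqueness} I expect to be the main obstacle, since $\cF_T$ is not convex. My plan is a contraction-mapping argument applied to the fixed-point equation \eqref{SCF_BDF}. If $\gamma_1,\gamma_2$ both solve it, an operator-Lipschitz estimate for the Fermi-Dirac function, obtained via an integral representation analogous to \eqref{formula_log}, bounds $\|\gamma_1-\gamma_2\|_{\gS_2}$ by a constant times $\|D_{\gamma_1}-D_{\gamma_2}\|_{\gS_2}$. Decomposing the latter into its direct and exchange contributions, the direct part is controlled by a bound of the form $\|\rho_{\gamma_1-\gamma_2}\ast|\cdot|^{-1}\|_\infty\leq C D(\nu,\nu)^{1/2}\|\gamma_1-\gamma_2\|_{\gS_2}$, where the $D(\nu,\nu)^{1/2}$ enters through an a priori $\gS_2$-bound on $\gamma_i-\tilde\gamma^0$ obtained by inserting a minimizer into \eqref{estim_below_BDF} and using $1-\alpha\pi/4>0$, while the exchange part is handled via the Kato bound used above. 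Tracking all constants through these estimates should produce precisely the bracketed prefactor in \eqref{condition_uniqueness}, whose being $\leq 1$ forces the contraction and hence $\gamma_1=\gamma_2$. The main technical hurdle is keeping every inequality sharp enough that the contraction closes exactly under the stated condition.
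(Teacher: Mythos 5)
Your treatment of the lower bound and of the Euler--Lagrange equation reproduces the paper's argument in essentially all details: you establish \eqref{estim_below_H_BDF} via Klein's inequality applied to the Fermi--Dirac form of $\tilde\gamma^0$ (with $H=\beta D_{\tilde\gamma^0}$), complete the square in the Coulomb term, and control the exchange term by Kato's inequality combined with \eqref{bound_below_free}; and for \eqref{SCF_BDF} you exclude $\pm\tfrac12$ from $\sigma(\gamma)$ by the entropy-derivative blow-up and then differentiate. This matches the paper.

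The uniqueness part is where you diverge, and where I see a genuine gap. You propose to treat \eqref{SCF_BDF} as a fixed-point equation and show the map $\gamma\mapsto\tfrac12\bigl(\tfrac1{1+e^{\beta D_\gamma}}-\tfrac1{1+e^{-\beta D_\gamma}}\bigr)$ is a contraction in $\gS_2$. Any such operator-Lipschitz estimate for the Fermi--Dirac map carries a constant that depends on $\beta$ (the derivative of $x\mapsto 1/(1+e^{\beta x})$ is $\beta e^{\beta x}/(1+e^{\beta x})^2$, and even on $\{|x|\geq c\}$ this is of order $\beta e^{-\beta c}$, which is bounded but not independent of $\beta$). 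Your contraction condition would therefore necessarily be temperature-dependent, whereas the stated condition \eqref{condition_uniqueness} is purely a condition on $\alpha$ and $D(\nu,\nu)$. There is also a secondary issue: $D_{\gamma_1}-D_{\gamma_2}$ contains the multiplication operator $\alpha(\rho_{\gamma_1-\gamma_2})\ast|\cdot|^{-1}$, which even after restriction to $\gH_\Lambda$ is not naturally an $\gS_2$ object, so the norm in which the contraction would close needs to be chosen with more care than your sketch suggests.

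The paper's route is different and sidesteps these problems entirely. It does not iterate the fixed-point map; instead it expands the energy around the minimizer $\gamma$ using the relative-entropy identity
\begin{equation*}
H(\gamma',\tilde\gamma^0)=H(\gamma,\tilde\gamma^0)+H(\gamma',\gamma)+\tr\left[(\gamma'-\gamma)\left(\ln\tfrac{\half+\gamma}{\half-\gamma}-\ln\tfrac{\half+\tilde\gamma^0}{\half-\tilde\gamma^0}\right)\right],
\end{equation*}
then substitutes the self-consistent equations \eqref{SCF_BDF} and \eqref{SCF_free} to eliminate the logarithms and obtain an exact formula for $\cF_T(\gamma')-\cF_T(\gamma)$. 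The residual $TH(\gamma',\gamma)$ is then bounded below by $\tr(|D_\gamma|(\gamma'-\gamma)^2)$ (Klein's inequality once more), and the crucial a priori bound $|D_\gamma|\geq d^{-1}|D^0|$ — extracted from the energy upper bound $\inf\cF_T\leq 0$ and the coercivity estimate, following \cite{HLS2} — lets Kato's inequality absorb the exchange term when $\alpha\pi d/4\leq 1$. All $\beta$-dependence is packaged inside $TH(\gamma',\gamma)$ and then discharged by the Klein lower bound, which is why the final condition is $T$-independent. If you want to recover \eqref{condition_uniqueness}, you should abandon the contraction-mapping framework and use this energy-expansion argument, since a contraction estimate cannot be made uniformly in $\beta$.
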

 The proof of Theorem
\ref{thm_prop_full} is provided in Section
\ref{sec:proof_thm_prop_full}.

\section{Proofs}
\subsection{Proof of Theorem \ref{thm_Debye_rBDF}}\label{sec:proof_Debye}
Let $\gamma$ be a solution of
\begin{equation}
\left\{\begin{array}{l}\displaystyle
\gamma = \frac12\left(\frac1{1+e^{\beta D_{\gamma}}}-\frac{1}{1+e^{-\beta D_{\gamma}}} \right), \smallskip\\
D_{\gamma} :=  D^0+\alpha(\rho_{\gamma-\gamma^0}-\nu)\ast|\cdot|^{-1}.
\end{array}\right.
\label{SCF_proof}
\end{equation}
For the sake of simplicity, we define $\rho:=\rho_{\gamma-\gamma^0}-\nu$ and $V=\alpha(\rho_{\gamma-\gamma^0}-\nu)\ast|\cdot|^{-1}$. Note that  $\nabla V\in L^2(\R^3)$ as $\rho\in\cC$, hence $V\in L^6(\R^3)$.
Following \cite[p. 4495]{HLS2}, we may use the Kato-Seiler-Simon inequality (see \cite{SeSi} and \cite[Thm.~4.1]{Simon})
\begin{equation}
 \forall p\geq2,\qquad \norm{f(-i\nabla)g(x)}_{\gS_p(L^2(\R^3))}\leq (2\pi)^{-3/p}
\norm{g}_{L^p(\R^3)}\norm{f}_{L^p(\R^3)}
\label{KSS}
\end{equation}
to obtain
$$\norm{V\frac{1}{|D^0|}}_{\gS_\ii(\gH_\Lambda)}\leq \norm{V\frac{1}{|D^0|}}_{\gS_6(\gH_\Lambda)}\leq C'\norm{V}_{L^6(\R^3)}\leq C\norm{\rho}_\cC.$$
This shows that
$|D_\gamma|\leq (1+\alpha C\norm{\rho}_\cC)|D^0|.$
Thanks to the cut-off in Fourier space, we deduce that $D_\gamma$ is a bounded operator or $\gH_\Lambda$.
Recall Duhamel's formula
\begin{equation}
e^{\beta D_\gamma} = e^{\beta D^0}+ \beta\int_0^1 e^{t \beta D_\gamma} V
e^{(1-t)\beta D^0} dt.
\label{Duhamel}
\end{equation}
Denoting $K :=  \beta\int_0^1 e^{t\beta  D_\gamma}V e^{(1-t)\beta D^0} dt$ and using \eqref{Duhamel}, we have
\begin{align*}
K&=K_0+K'\\
&:=\beta\int_0^1dt\; e^{t\beta  D^0}V e^{(1-t)\beta D^0} +\beta^2\int_0^1dt\int_0^tds\; e^{s\beta  D_\gamma}V e^{(t-s)\beta D^0}Ve^{(1-t)\beta D^0}.
\end{align*}
We obtain for the self-consistent solution
\begin{align}
\gamma - \gamma^0&= \frac 1{1 + e^{\beta D_\gamma}} - \frac 1{1+ e^{\beta D^0}}\nonumber \\
&= - \frac 1{1+ e^{\beta D^0}} K_0 \frac 1{1+ e^{\beta D^0}} - \frac 1{1+ e^{\beta D^0}} K' \frac 1{1+ e^{\beta D^0}}\nonumber\\
& \qquad\qquad\qquad\qquad+ \frac 1{1+ e^{\beta D^0} }K\frac 1{1+ e^{\beta D_\gamma}} K \frac 1{1+ e^{\beta D^0} }\label{SCF_proof_decomp}
\end{align}
which we write as $\gamma - \gamma^0= A+B$ where
$$A=- \frac 1{1+ e^{\beta D^0}} K_0 \frac 1{1+ e^{\beta D^0}}=-\beta\int_0^1 \frac{e^{t\beta  D^0}}{1+ e^{\beta D^0}}V \frac{e^{(1-t)\beta D^0}}{1+ e^{\beta D^0}} dt.$$
As $V\in L^6(\R^3)$ and $D_\gamma$ is bounded, using the cut-off in Fourier space and the Kato-Seiler-Simon inequality \eqref{KSS}, we have  $K\in\gS_6(\gH_\Lambda)$. Hence we obtain that $K'\in\gS_3(\gH_\Lambda)$ and $B\in\gS_3(\gH_\Lambda)$.

The next step is to compute the density of $A$. The kernel of $A$ is given by
$$\widehat{A}(p,q)=-\beta(2\pi)^{-3/2}\int_0^1 \frac{e^{t\beta  D^0(p)}}{1+ e^{\beta D^0(p)}}\widehat{V}(p-q) \frac{e^{(1-t)\beta D^0(q)}}{1+ e^{\beta D^0(q)}} dt.$$
Using \eqref{def_rho}, we obtain
$$\widehat{\rho_{A}}(k)=-\frac{\alpha C(|k|)}{|k|^2}\widehat{\rho}(k)$$
where
\begin{equation}\label{defC}
C(|k|):=\frac{\beta}{2\pi^2}\int_{\substack{|p+k/2|\leq\Lambda\\ |p-k/2|\leq\Lambda}}dp\int_0^1dt\; \tr_{\C^4}\left[\frac{e^{t\beta  D^0(p+k/2)}}{1+ e^{\beta D^0(p+k/2)}} \frac{e^{(1-t)\beta D^0(p-k/2)}}{1+ e^{\beta D^0(p-k/2)}}\right].
\end{equation}
Inserting this into the self-consistent equation \eqref{SCF_proof_decomp} gives
\begin{equation}
\widehat{\rho}(k)=-\widehat{\nu}(k)-\frac{\alpha C(|k|)}{|k|^2}\widehat{\rho}(k)+\widehat{\rho_B}(k)
\end{equation}
or, equivalently,
\begin{equation}
\widehat{\rho}(k)=\widehat{b_1}(k)
(-\widehat{\nu}(k)+\widehat{\rho_B}(k)), \label{SCF_eq_for_rho}
\end{equation}
and
\begin{equation}
\widehat{V}(k)=4\pi \widehat{b_2}(k)
(-\widehat{\nu}(k)+\widehat{\rho_B}(k)), \label{SCF_eq_for_V}
\end{equation}
where
\begin{equation}\label{b_1b_2} b_1:=\cF^{-1}\left(\frac{|k|^2}{|k|^2+\alpha
C(|k|)}\right)\quad\text{and}\quad
b_2:=\cF^{-1}\left(\frac{1}{|k|^2+\alpha C(|k|)}\right),
\end{equation}
with $\cF^{-1}$ denoting the inverse Fourier transform.
 Our main
tool will be the following
\begin{prop}[{\bf Properties of $b_1,b_2$}]\label{prop_C}
The two functions $b_1(x)$ and $b_2(x)$, defined in \eqref{b_1b_2} and \eqref{defC},
belong to $L^1(\R^3)$.
\end{prop}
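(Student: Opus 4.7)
The plan is to extract the exact regularity and decay of the Fourier symbols of $b_1$ and $b_2$ and then pass back to physical space via a weighted $L^2$ (Plancherel/Cauchy--Schwarz) estimate.

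First, I would pin down the regularity and positivity of $C(|k|)$. The constraint $|p\pm k/2|\leq \Lambda$ in \eqref{defC} forces the integration domain to be empty for $|k|>2\Lambda$, so $C\equiv 0$ there. On $\{|k|<2\Lambda\}$ the integrand is smooth, and the standard spherical--cap formula gives $|B(k/2,\Lambda)\cap B(-k/2,\Lambda)|\sim (\pi\Lambda/2)(2\Lambda-|k|)^2$ as $|k|\to 2\Lambda^-$. Combined with the boundedness of the integrand at the boundary of the lens, this yields $C\in C^{1,1}(\R^3)\cap H^2(\R^3)$: $C$ and $\nabla C$ are continuous and vanish at $|k|=2\Lambda$, while the second derivatives are only bounded with a jump across the sphere. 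A direct evaluation at $k=0$ shows $C(0)>0$, so the denominator $|k|^2+\alpha C(|k|)$ is bounded below by some $\eta>0$ uniformly in $k$.

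Next, I would isolate a Yukawa piece from the symbol of $b_2$ by writing
\begin{equation*}
\widehat{b_2}(k)=\frac{1}{|k|^2+\alpha C(0)}+g(k),\qquad g(k):=\frac{\alpha\bigl(C(0)-C(|k|)\bigr)}{\bigl(|k|^2+\alpha C(|k|)\bigr)\bigl(|k|^2+\alpha C(0)\bigr)}.
\end{equation*}
The first summand is the Fourier transform of the Yukawa potential $(4\pi|x|)^{-1}e^{-\sqrt{\alpha C(0)}\,|x|}\in L^1(\R^3)$. The remainder $g$ is bounded (the numerator $C(0)-C(|k|)=O(|k|^2)$ kills the potential issue at $k=0$), decays like $O(|k|^{-4})$ at infinity (since $C\equiv 0$ outside $B(0,2\Lambda)$), and inherits the $C^{1,1}$ regularity of $C$. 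Together with the decay of its derivatives, this places $g\in H^2(\R^3)$.

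To conclude $\cF^{-1}(g)\in L^1(\R^3)$, I would use the identity $|x|^2\leftrightarrow -\Delta_k$: by Plancherel, $\|(1+|x|^2)\cF^{-1}(g)\|_{L^2}=\|(1-\Delta_k)g\|_{L^2}<\infty$, and a Cauchy--Schwarz estimate against the $L^2$ weight $(1+|x|^2)^{-1}\in L^2(\R^3)$ then gives $\cF^{-1}(g)\in L^1$. Adding the Yukawa contribution yields $b_2\in L^1(\R^3)$. For $b_1$, I would use $\widehat{b_1}(k)=1-\alpha C(|k|)\widehat{b_2}(k)$: the second piece is compactly supported in $\{|k|\leq 2\Lambda\}$ and $C^{1,1}$, hence in $H^2(\R^3)$, so by the same weighted--$L^2$ argument its inverse Fourier transform lies in $L^1$.

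The main obstacle is the precise order of vanishing of $C$ at $|k|=2\Lambda$. A merely first--order vanishing would only give $C$ Lipschitz and force $g\in H^{3/2-\epsilon}$, which is borderline insufficient for the weighted--$L^2$ estimate. The quadratic (second--order) vanishing coming from the cap-volume asymptotic $(2\Lambda-|k|)^2$ is what supplies $C\in C^{1,1}$ and, consequently, $g\in H^2$, at which point the Fourier--analytic conclusion is standard.
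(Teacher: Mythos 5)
Your approach is a genuinely different route from the paper's. The paper does not use any Sobolev embedding or Yukawa splitting. Instead it follows Pauli and Rose: it splits $C = C_1 + C_2$ according to $P_0^\pm$, performs explicit changes of variables that turn $C_1,C_2$ into two-dimensional integrals whose integrands are real analytic in $r=|k|$ on a neighborhood of $[0,2\Lambda]$, reads off $C_1(2\Lambda)=C_1'(2\Lambda)=C_2(2\Lambda)=C_2'(2\Lambda)=0$ (using $Z_\Lambda(2\Lambda)=0$) and $C_1(0)>0$, and then applies the one-dimensional radial Fourier inversion formula with three integrations by parts to get the pointwise decay $b(x)=O(|x|^{-4})$. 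You instead isolate a Yukawa kernel $\cF^{-1}\big((|k|^2+\alpha C(0))^{-1}\big)$, establish that the remainder $g$ lies in $H^2(\R^3)$, and invoke the weighted Plancherel/Cauchy--Schwarz form of the embedding $H^s(\R^3)\hookrightarrow \cF L^1(\R^3)$ for $s>3/2$. Both arguments rest on exactly the same three facts about $C$ --- $C(0)>0$, ${\rm supp}\,C\subset \overline{B(0,2\Lambda)}$, and second-order vanishing at $|k|=2\Lambda$ --- and your Yukawa splitting is in fact a cleaner way to handle the non-compact support of $\widehat{b_2}$ than the paper's radial formula (which, as written with $\int_0^{2\Lambda}$ and $\widehat{b}(2\Lambda)=\widehat{b}'(2\Lambda)=0$, is strictly correct only for the compactly supported remainders $\widehat{b_1}-1$ and $\widehat{b_2}-|k|^{-2}$, not for $\widehat{b_1},\widehat{b_2}$ themselves).

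There is one real gap. Your justification that $C\in C^{1,1}(\R^3)$ rests entirely on the cap-volume asymptotic $|B(k/2,\Lambda)\cap B(-k/2,\Lambda)|\sim\frac{\pi\Lambda}{2}(2\Lambda-|k|)^2$ together with boundedness of the integrand. That only gives the pointwise bound $|C(r)|\le M(2\Lambda-r)^2$; it does \emph{not} by itself control $C'$ or $C''$ near $r=2\Lambda$ (a function $O((2\Lambda-r)^2)$ can have a first derivative that does not tend to zero, and a second derivative that is unbounded). To make the assertion $C\in C^{1,1}\cap H^2$ rigorous you need to differentiate $C$ twice under the integral sign, which here involves a $k$-dependent domain of integration; the Reynolds transport theorem produces a boundary term on $\partial\big(B(k/2,\Lambda)\cap B(-k/2,\Lambda)\big)$ of surface area $O(2\Lambda-|k|)$, and a second differentiation yields a bounded (but generally discontinuous) second derivative across the sphere $\{|k|=2\Lambda\}$. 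None of this is difficult, but it is precisely the content that the paper obtains for free from the Pauli--Rose change of variables, which expresses $C_1,C_2$ as integrals of functions analytic in $r$ over a fixed domain depending smoothly on $r$. If you fill this step in --- either via the transport theorem or by adopting the Pauli--Rose parametrization --- the remainder of your proof (Yukawa splitting, $g\in H^2$, and the $H^2\hookrightarrow\cF L^1$ estimate) is correct, as is your reduction of the $b_1$ case to the compactly supported symbol $-\alpha C(|k|)\widehat{b_2}(k)$.
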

We postpone the proof of Proposition \ref{prop_C} to Section
\ref{sec:proof_prop_C} and first complete the proof of Theorem
\ref{thm_Debye_rBDF}. First we claim that $\rho_B\in L^3(\R^3)$. To
see this, we take a function $\xi\in L^{3/2}(\R^3)\cap C^\ii_0(\R^3)$
and compute
\begin{align*}
  |\tr(B\xi)|&=|\tr(B\1_{B(0,\Lambda)}(p)\xi\1_{B(0,\Lambda)}(p))|\\
  &\leq
  \norm{B}_{\gS_3(\gH_\Lambda)}\norm{\1_{B(0,\Lambda)}(p)\xi\1_{B(0,\Lambda)}(p)}_{\gS_{3/2}(\gH_\Lambda)}.
\end{align*}
Writing $\xi=|\xi|^{1/2}\text{sgn}(\xi)|\xi|^{1/2}$ and using the
Kato-Seiler-Simon inequality \eqref{KSS} twice in
$\gS_{3}(\gH_\Lambda)$, we obtain $|\tr(B\xi)|\leq
C\norm{B}_{\gS_3(\gH_\Lambda)}\norm{\xi}_{L^{3/2}(\R^3)}$ where $C$
depends on the cut-off $\Lambda$. This proves by duality that
$\rho_B\in L^3(\R^3)$.

Next we use a boot-strap argument. As $\nu\in L^1(\R^3)$ and
$\rho_B\in L^3(\R^3)$, we get from \eqref{SCF_eq_for_V} and
Proposition \ref{prop_C} that $V\in L^3(\R^3)$. Inserting in the
definition of $K'$ and using \eqref{KSS} once more, we obtain that
$K'\in\gS_{3/2}(\gH_\Lambda)$, hence $B\in \gS_2(\R^3)$ and $\rho_B\in
L^2(\R^3)$. Using again \eqref{SCF_eq_for_V} and Proposition
\ref{prop_C}, we get that $V\in L^2(\R^3)$, hence
$B\in\gS_1(\gH_\Lambda)$ and $\rho_B\in L^1(\R^3)$. This finishes the
proof of Theorem \ref{thm_Debye_rBDF}, by \eqref{SCF_eq_for_rho},
\eqref{SCF_eq_for_V} and Proposition \ref{prop_C}.\qed

\subsection{Proof of Proposition \ref{prop_C}}\label{sec:proof_prop_C}
The proof proceeds along the same lines as in the Appendix of
\cite{GLS}. In the following we shall denote by $P_0^+$ and $P_0^-$
the projection onto the positive and negative spectral subspace of
$D^0$, respectively. As multiplication operators in momentum space,
$$P_0^+(p) = \frac 12 \left(1 + \frac{\alp\cdot p + \beta}{E(p)}\right) \quad ,\
P_0^-(p) = \frac 12 \left(1 - \frac{\alp\cdot p +
\beta}{E(p)}\right).$$ The function $C$ in (\ref{defC}) can be written as
\begin{multline}
C(|k|)=\frac{\beta}{\pi^2}\int_{\substack{|p+k/2|\leq\Lambda\\ |p-k/2|\leq\Lambda}}dp\times\\
\times\int_0^1dt\Bigg(
 \tr_{\C^4}\left[\frac{e^{t\beta  E(p+k/2)}}{1+ e^{\beta E(p+k/2)}} \frac{e^{(1-t)\beta E(p-k/2)}}{1+ e^{\beta E(p-k/2)}}P^+_0(p+k/2)P^+_0(p-k/2)\right]\\
+\tr_{\C^4}\left[\frac{e^{t\beta  E(p+k/2)}}{1+ e^{\beta E(p+k/2)}} \frac{e^{-(1-t)\beta E(p-k/2)}}{1+ e^{-\beta E(p-k/2)}}P^+_0(p+k/2)P^-_0(p-k/2)\right]\Bigg).
\end{multline}
Hence
\begin{align*}
&C(|k|)=\\
&\qquad\frac{1}{\pi^2}\int_{\substack{|p+k/2|\leq\Lambda\\ |p-k/2|\leq\Lambda}}
\frac{1}{1+ e^{\beta E(p+k/2)}}\frac{e^{\beta E(p+k/2)}-e^{\beta E(p-k/2)}}{E(p+k/2)- E(p-k/2)} \frac{1}{1+ e^{\beta E(p-k/2)}}\times\\
&\qquad\qquad\qquad\qquad\times\left[1+\frac{(p+k/2)\cdot(p-k/2)+1}{E(p+k/2)E(p-k/2)}\right]dp\\
& + \frac{1}{\pi^2}\int_{\substack{|p+k/2|\leq\Lambda\\ |p-k/2|\leq\Lambda}}
\frac{1}{1+ e^{\beta E(p+k/2)}}\frac{e^{\beta E(p+k/2)}-e^{-\beta E(p-k/2)}}{E(p+k/2)+E(p-k/2)} \frac{1}{1+ e^{-\beta E(p-k/2)}}\times\\
&\qquad\qquad\qquad\qquad\times\left[1-\frac{(p+k/2)\cdot(p-k/2)+1}{E(p+k/2)E(p-k/2)}\right]dp.
\end{align*}
For the sake of clarity, we denote by $C_1(|k|)$ (resp. $C_2(|k|)$) the first (resp. second) integral of the previous formula. By the monotonicity of the exponential function, it is easily seen that $C_1(|k|)\geq0$ and $C_2(|k|)\geq0$.

The next step is to simplify the above integral formula. We follow a method of Pauli and Rose \cite{PauliRose} which was recently used in the appendix of \cite{GLS}. After two changes of variables, we end up with
\begin{align}
&C_1(|k|)=\frac{8}{\pi|k|}\int_0^{Z_\Lambda(|k|)}\!\!dz\int_0^{\frac{|k|}2z}\!\!dv\frac{e^{\beta w(k,z)}}{1+e^{\beta (w(k,z)+v)}}\frac{\sinh(\beta v)}{v}\times\nonumber\\
& \qquad\qquad\qquad\qquad\qquad\qquad\qquad\qquad\qquad\times\frac{1}{1+e^{\beta (w(k,z)-v)}}\frac{z\;w(k,z)^{-1}}{(1-z^2)^3}\nonumber\\
&\qquad+\frac{8}{\pi|k|}\int_0^{\frac{|k|}2Z_\Lambda(|k|)}\!\!dz\int_0^{z}\!\!dv\frac{e^{\beta (E(\Lambda)-z)}}{1+e^{\beta (E(\Lambda)-z+v)}}\frac{\sinh(\beta v)}{v}\frac{1}{1+e^{\beta (E(\Lambda)-z-v)}}\times\nonumber\\
& \qquad\qquad\qquad\qquad\qquad\qquad\qquad\qquad\times\left((E(\Lambda)-z)^2-\frac{|k|^2}{4}\right),\label{formula_C1}
\end{align}
\begin{align}
&C_2(|k|)=\frac{8}{\pi|k|}\int_0^{Z_\Lambda(|k|)}\!\!dz\int_0^{\frac{|k|}2z}\!\!dv\frac{e^{\beta v}}{1+e^{\beta (w(k,z)+v)}}\frac{\sinh(\beta w)}{1+\frac{|k|^2}4(1-z^2)}\times\nonumber\\
& \qquad\qquad\qquad\qquad\qquad\qquad\qquad\qquad\times\frac{1}{1+e^{\beta (v-w(k,z))}}\left(\frac{|k|^2}{4}-v^2\right)\frac{z}{1-z^2}\nonumber\\
&\qquad+\frac{8}{\pi|k|}\int_0^{\frac{|k|}2Z_\Lambda(|k|)}\!\!dz\int_0^{z}\!\!dv\frac{e^{\beta v}}{1+e^{\beta (E(\Lambda)-z+v)}}\frac{\sinh(\beta (E(\Lambda)-z))}{E(\Lambda)-z}\times\nonumber\\
& \qquad\qquad\qquad\qquad\qquad\qquad\qquad\times\frac{1}{1+e^{\beta (v+z-E(\Lambda))}}\left(\frac{|k|^2}{4}-v^2\right).\label{formula_C2}
\end{align}
In the above formulas we have used the notation (as in \cite{GLS})
$$Z_\Lambda(r)=\frac{\sqrt{1+\Lambda^2}-\sqrt{1+(\Lambda-r)^2}}{r}.$$
Note that $Z_\Lambda$ is a decreasing $C^\ii$ function on $[0,2\Lambda]$ satisfying $Z_\Lambda(0)=\Lambda/E(\Lambda)$, $Z_\Lambda(2\Lambda)=0$. We have also used the shorthand notation
$$w(k,z)=\sqrt{\frac{1+|k|^2(1-z^2)/4}{1-z^2}}.$$
All integrands of the above formulas are real analytic functions of $r=|k|$ on a neighborhood of $[0,2\Lambda]$. Also all the integrals vanish at $k=0$. We deduce that $C_1$ and $C_2$ are smooth functions on $[0,2\Lambda]$. Using $Z_\Lambda(2\Lambda)=0$, one also sees that $C_1(2\Lambda)=C_1'(2\Lambda)=C_2(2\Lambda)=C_2'(2\Lambda)=0$. A Taylor expansion of the first integral of $C_1$ yields
$$C_1(0)=\frac4\pi \beta\int_1^{E(\Lambda)}\frac{t^2dt}{(1+e^{-\beta t})(1+e^{\beta t})}>0.$$
The end of the proof of Proposition \ref{prop_C} is then the same as in \cite[Prop. 17]{GLS}. First we notice that as $C(r)$ is bounded and has a compact support, $b_1$ and $b_2$ are in $L^\ii(\R^3)$. We now prove that they decay at least like $|x|^{-4}$  at infinity meaning that they also belong to $L^1(\R^3)$. To this end we write for $b=b_1$ or $=b_2$ the inverse Fourier transform in radial coordinates:
\begin{equation}
\forall x \in \R^3 \setminus \{ 0 \},\quad  b(x) = \frac{1}{\sqrt{2 \pi} |x|} \int_0^{2 \Lambda} (r\widehat{b}(r)) \sin(r |x|)\, dr.
\label{formula_Fourier_inverse_radial}
\end{equation}
 Integrating by parts and using
$\widehat{b}(2\Lambda)=\widehat{b}'(2\Lambda)=0$ yields
\begin{multline}
\forall x \in \R^3 \setminus \{ 0 \},\quad  b(x) = \frac{1}{\sqrt{2 \pi} |x|^4}\Bigg(2\Lambda \widehat{b}''(2\Lambda)\cos(2\Lambda|x|)-2\widehat{b}'(0)\\
- \int_0^{2 \Lambda} (r\widehat{b})^{(3)}(r) \cos(r |x|)\, dr\Bigg).
\label{formula_Fourier_inverse_radial2}
\end{multline}
This completes the proof of Proposition \ref{prop_C}.
\qed

\subsection{Proof of Theorem \ref{thm_free_case}}\label{sec:proof_free_case}
The proof is inspired by ideas from \cite{HLSo}.
We denote $I:=\inf_{\gamma\in\cA}\cT_T(\gamma)$.
We start by introducing the following auxiliary minimization problem
\begin{equation}
 J=\inf_{\gamma\in\cB}\cT_T(\gamma)
\label{auxi_pb}
\end{equation}
where $\cB\subset\cA$ is given by
\begin{equation}\label{defcB}
\cB:=\left\{\gamma\in\cA,\ \gamma(p)=f_1(|p|)\alp\cdot p+f_0(|p|)\beta,\
f_0,f_1\leq0\right\}.
\end{equation}

\begin{lemma}\label{lemma_exists_min_aux_pb}
There exists a minimizer $\tilde\gamma^0\in \cB$ for \eqref{auxi_pb}.
\end{lemma}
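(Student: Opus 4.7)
The plan is to apply the direct method of the calculus of variations; the main challenge is obtaining weak lower semi-continuity of the concave exchange term.

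Let $(\gamma_n)\subset\cB$ be a minimizing sequence, written as $\gamma_n(p)=f_1^{(n)}(|p|)\,\alp\cdot p+f_0^{(n)}(|p|)\,\beta$. Since $(\alp\cdot p)^2=|p|^2$, $\beta^2=1$ and $\{\alp\cdot p,\beta\}=0$, the square $\gamma_n(p)^2=\big((f_0^{(n)})^2+|p|^2(f_1^{(n)})^2\big)$ is a scalar on $\C^4$, so the constraint $-\half\le\gamma_n(p)\le\half$ amounts to $(f_0^{(n)})^2+|p|^2(f_1^{(n)})^2\le 1/4$ a.e. In particular both $f_0^{(n)}$ and $p\mapsto|p|f_1^{(n)}(|p|)$ are uniformly bounded in $L^\infty(B(0,\Lambda))$, hence in $L^2(B(0,\Lambda))$ since the ball has finite measure. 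Passing to a subsequence, $f_0^{(n)}\rightharpoonup f_0$ and $f_1^{(n)}\rightharpoonup f_1$ weakly in $L^2(B(0,\Lambda))$. The sign constraints $f_0,f_1\le 0$ and the pointwise inequality $f_0^2+|p|^2 f_1^2\le 1/4$ each define closed convex subsets of $L^2$, hence weakly closed by Mazur's lemma; so $\tilde\gamma^0:=f_0\beta+f_1\,\alp\cdot p$ lies in $\cB$.

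Next I verify that each of the three pieces of $\cT_T$ is weakly lower semi-continuous at $\tilde\gamma^0$ along the sequence. The kinetic term reduces to $4\int_{B(0,\Lambda)}(|p|^2 f_1(p)+f_0(p))\,dp$, a linear functional whose coefficients $1$ and $|p|^2$ lie in $L^2(B(0,\Lambda))$; it is therefore weakly continuous. The entropy term $\frac{T}{(2\pi)^3}\int\tr_{\C^4}\phi(\gamma(p))\,dp$, with $\phi(x)=(\half+x)\ln(\half+x)+(\half-x)\ln(\half-x)$, is convex in $\gamma(p)$ by the operator convexity of $\tr\phi$, and is strongly continuous on $L^2(B(0,\Lambda))$ by dominated convergence: $|\phi|\le\ln 2$ on $[-\half,\half]$ and any strongly $L^2$-convergent sequence admits an a.e.\ convergent subsequence on which the continuous functional calculus applies pointwise. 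Being convex and strongly continuous, this term is weakly lower semi-continuous.

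The exchange term is the crux. Expanding $\tr_{\C^4}[\gamma(p)\gamma(q)]=4f_0(p)f_0(q)+4(p\cdot q)f_1(p)f_1(q)$ reduces it to a combination of quadratic forms of type $\langle g,Kg\rangle_{L^2(B(0,\Lambda))}$, where $K$ is the integral operator with kernel $|p-q|^{-2}$ restricted to $B(0,\Lambda)\times B(0,\Lambda)$ and $g$ runs over the weakly convergent sequences $f_0^{(n)}$ and $p_i f_1^{(n)}$. Up to a constant $K$ is the Riesz potential $(-\Delta)^{-1/2}$ on $\R^3$: for $g\in L^2(\R^3)$ extended by zero, Hardy-Littlewood-Sobolev gives $Kg\in L^6(\R^3)$, while $\nabla Kg$ is a Riesz transform of $g$ and hence belongs to $L^2(\R^3)$. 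Therefore $Kg\in H^1(B(0,\Lambda))$ with norm controlled by $\|g\|_{L^2}$, and the Rellich-Kondrachov theorem makes $K$ compact on $L^2(B(0,\Lambda))$. A compact quadratic form is weakly continuous, so the exchange term passes to the limit. Combining the three items, $\cT_T(\tilde\gamma^0)\le\liminf_n\cT_T(\gamma_n)=J$, which identifies $\tilde\gamma^0\in\cB$ as a minimizer. The principal technical point is this compactness of $K$; without it the concavity of the exchange term would only yield weak upper semi-continuity and the direct method would fail.
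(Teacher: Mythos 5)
Your proof follows the same direct-method strategy as the paper -- extract a weakly convergent minimizing sequence, show $\cB$ is weakly closed, and verify weak lower semi-continuity of each of the three terms -- but it is more self-contained: where the paper works with the weak-$\ast$ topology on $L^\infty(B(0,\Lambda))$ and simply cites \cite{HLSo} for the weak continuity of the exchange term, you work with weak $L^2$ convergence of the scalar coefficients $f_0,f_1$ and actually prove the exchange term is a compact quadratic form via Hardy--Littlewood--Sobolev, the Riesz transform bound, and Rellich--Kondrachov. That is a worthwhile addition, and your identification of the concave exchange term as the only non-trivial point is exactly right.

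One small step is stated imprecisely and needs a line of repair. You establish that $f_0^{(n)}$ and $p\mapsto |p|\,f_1^{(n)}(|p|)$ are uniformly bounded in $L^\infty$, hence in $L^2$, and then immediately extract a weak $L^2$ limit of $f_1^{(n)}$ itself; but the preceding bound only controls $|p|\,f_1^{(n)}$, not $f_1^{(n)}$. The fix is immediate: the constraint gives $|f_1^{(n)}(|p|)|\le \tfrac{1}{2|p|}$ pointwise, and $|p|^{-1}\in L^2(B(0,\Lambda))$ in dimension three (indeed $\int_{B(0,\Lambda)}|p|^{-2}\,dp=4\pi\Lambda<\infty$), so $f_1^{(n)}$ is uniformly dominated by a fixed $L^2$ function and therefore bounded in $L^2$. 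With that one-sentence addition the extraction of the weak limit, and everything downstream of it, is correct.
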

\begin{proof}[Proof of Lemma \ref{lemma_exists_min_aux_pb}] The functional
$\cT_T$ is weakly lower semi-continuous for the weak-$\ast$ topology of
$L^\ii(B(0,\Lambda))$. This is because $-S$ is convex and the exchange term is
continuous for the weak topology of $L^2(B(0,\Lambda))$ as shown in \cite{HLSo}.
Also $\cB$ is a bounded closed convex subset of $L^\ii(B(0,\Lambda))$. Hence
there exists a minimum.
\end{proof}
\begin{lemma}\label{lemma_estim_gamma}
Let  $\tilde\gamma^0\in\cB$ be a minimizer of \eqref{auxi_pb}. Then there exists an $\epsilon>0$ such that
$|\tilde\gamma^0|\leq 1/2-\epsilon$.
\end{lemma}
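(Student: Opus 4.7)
The plan is to argue by contradiction, exploiting the divergence of the entropy derivative as the constraint $|\gamma|\leq 1/2$ is approached. Define the nonnegative spectral radius $s(p) := \sqrt{f_1(|p|)^2|p|^2 + f_0(|p|)^2}$ of $\tilde\gamma^0(p)$, and suppose that for every $\delta > 0$ the radial set $E_\delta := \{p \in B(0,\Lambda) : s(p) > 1/2 - \delta\}$ has positive Lebesgue measure. I will exhibit, for $\delta$ small enough, a competitor $\gamma_t \in \cB$ with $\cT_T(\gamma_t) < \cT_T(\tilde\gamma^0)$, contradicting the minimality of $\tilde\gamma^0$.

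First I would define the competitor by multiplicatively shrinking $\tilde\gamma^0$ on $E_\delta$: for $t \in (0,1)$, set
\begin{equation*}
\gamma_t(p) := \bigl(1 - t\chi_{E_\delta}(p)\bigr)\,\tilde\gamma^0(p).
\end{equation*}
Since $\chi_{E_\delta}$ is radial and $f_0, f_1\leq 0$, the functions $(1-t\chi_{E_\delta})f_j$ are radial and still nonpositive, while $|\gamma_t(p)| \leq s(p) \leq 1/2$; hence $\gamma_t \in \cB$.

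Next I would bound the three contributions to $\cT_T(\gamma_t) - \cT_T(\tilde\gamma^0)$. The constraint $s\leq 1/2$ forces $|f_0|, |p||f_1| \leq 1/2$ and $\|\tilde\gamma^0(p)\|_{\gS_2} = 2s(p) \leq 1$. An elementary computation gives $\tr_{\C^4}[D^0(p)\tilde\gamma^0(p)] = 4\bigl(|p|^2 f_1(|p|) + f_0(|p|)\bigr)$, uniformly bounded on $B(0,\Lambda)$, so the change of the kinetic term is bounded by $C_1 t|E_\delta|$. A Cauchy--Schwarz estimate combined with the Hilbert--Schmidt bound above and the integrability of $|p-q|^{-2}$ on $B(0,\Lambda)^2$ likewise bounds the change of the exchange term by $C_2 t|E_\delta|$ for $t\in(0,1)$. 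The decisive gain comes from the entropy. Since the eigenvalues of $\tilde\gamma^0(p)$ are $\pm s(p)$ with multiplicity two, $S(\gamma) = 4\int F(s(p))\,dp$ where $F(x) := -(\tfrac12+x)\ln(\tfrac12+x) - (\tfrac12-x)\ln(\tfrac12-x)$. Concavity of $F$ and $F'(x) = \ln\bigl((\tfrac12-x)/(\tfrac12+x)\bigr)$ yield pointwise on $E_\delta$,
\begin{equation*}
F\bigl((1-t)s\bigr) - F(s) \;\geq\; -tsF'(s) \;=\; ts\ln\!\bigl(\tfrac{1/2+s}{1/2-s}\bigr) \;\geq\; t(1/2-\delta)|\ln(2\delta)|,
\end{equation*}
using $(1/2-s)/(1/2+s) \leq 2\delta$ on $E_\delta$ (valid for $\delta<1/2$). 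Integrating and combining the three estimates gives
\begin{equation*}
\cT_T(\gamma_t) - \cT_T(\tilde\gamma^0) \;\leq\; t|E_\delta|\,\bigl[(C_1+C_2) - 4T(1/2-\delta)|\ln(2\delta)|\bigr],
\end{equation*}
which is strictly negative once $\delta$ is chosen so small that the logarithmic entropy gain dominates.

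The main obstacle, in my view, is not any single estimate but rather the ordering of the two small parameters: one must first fix $\delta$ small (depending on $T$, $\alpha$, $\Lambda$) so that $4T(1/2-\delta)|\ln(2\delta)|$ beats the bounded constants $C_1+C_2$, and only then choose $t$ small enough for the perturbative bounds on the exchange term to hold with acceptable remainder. The argument is deliberately one-sided and avoids the Euler--Lagrange equation, which would itself presuppose $|\tilde\gamma^0|\leq 1/2-\epsilon$, i.e. precisely the conclusion this lemma provides.
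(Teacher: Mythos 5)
Your overall strategy — argue by contradiction, build a competitor by multiplicatively shrinking $\tilde\gamma^0$ on the set $E_\delta$, and exploit the fact that the entropy term's derivative diverges as the constraint $\|\tilde\gamma^0(p)\|\to 1/2$ while the kinetic and exchange terms have bounded derivatives — is exactly the (much terser) argument of the paper, and the algebraic preliminaries (eigenvalues $\pm s(p)$ of multiplicity $2$, $\tr_{\C^4}[D^0(p)\tilde\gamma^0(p)]=4(|p|^2 f_1+f_0)$, boundedness of kinetic and exchange contributions by $Ct|E_\delta|$) are all correct.

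However, the key displayed inequality has the wrong direction. You write that concavity of $F$ yields
\begin{equation*}
F\bigl((1-t)s\bigr)-F(s)\;\geq\;-t\,s\,F'(s),
\end{equation*}
but concavity gives precisely the \emph{opposite}: a concave function lies \emph{below} its tangent lines, so $F(y)\leq F(s)+F'(s)(y-s)$, i.e. $F((1-t)s)-F(s)\leq -tsF'(s)$. As written, your bound is an upper bound on the entropy gain, not the lower bound you need. The estimate can be repaired: from $F((1-t)s)-F(s)=\int_{(1-t)s}^{s}\bigl(-F'(u)\bigr)\,du$ and the monotonicity of $-F'(u)=\ln\frac{1/2+u}{1/2-u}$ (increasing, because $F$ is concave), one obtains the valid lower bound
\begin{equation*}
F\bigl((1-t)s\bigr)-F(s)\;\geq\;t\,s\,\bigl(-F'((1-t)s)\bigr),
\end{equation*}
with the argument of $F'$ at $(1-t)s$ rather than at $s$. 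On $E_\delta$ this still produces a logarithmic divergence (now of the form $\ln\bigl(1/(\delta+t/2)\bigr)$ rather than $\ln(1/(2\delta))$), so your conclusion survives; but the order of quantifiers must then be handled a bit more carefully — e.g.\ fix $\delta$ small enough that $4T(1/2-\delta)\ln(1/(4\delta))>C_1+C_2$ and then take $t<2\delta$, or, more cleanly, simply compute the one-sided derivative of $\cT_T(\gamma_t)$ at $t=0^+$ on $E_\delta$, which equals $4T\int_{E_\delta}s\,F'(s)\,dp$ plus bounded terms and is strictly negative for $\delta$ small. As it stands, the proof as written is not correct at this step.
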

\begin{proof}
For $x\in [1/2,1/2]$,
$$s(x):= \left(\half+x\right)\ln\left(\half+x\right)+\left(\half-x\right)\ln\left(\half-x\right)$$
is an even function of $x$. Because of the special form of
$\tilde\gamma^0$, we have $\tilde\gamma^0(p)^2=\|\tilde\gamma^0(p)\|^2
I_{\C^4}$ for all $p\in B(0,\Lambda)$, where $\|\,\cdot\,\|$ denotes the matrix norm. Hence
$$\forall\gamma\in\cB,\quad S(\gamma)=-4\int_{B(0,\Lambda)}s(\|\gamma(p)\|)dp.$$
The derivative of $s$ is infinite at $x=1/2$ and the derivative of the terms of the first line of \eqref{def_fn_F} stays bounded. It is therefore clear that
$\{p\in B(0,\Lambda)\ |\ 1/2-\epsilon\leq\|\tilde\gamma^0(p)\|\leq1/2\}$
has zero measure for $\epsilon$ small enough.
\end{proof}

Let us now write the first order condition satisfied by $\tilde\gamma^0$. Since $\|\tilde\gamma^0(p)\|\leq 1/2-\epsilon$ for some $\epsilon$ small enough, we can consider a perturbation of the form
$$\gamma(p)=\tilde\gamma^0(p)+t\left(g_1(|p|)\alp\cdot p+g_0(|p|)\beta\right)$$
with $g_0,g_1\leq0$ and $t>0$ small enough. We obtain
\begin{equation}
\int_{B(0,\Lambda)}\tr_{\C^4}\left[\left(D_{\tilde\gamma^0}(p)+T\ln\frac{1/2+\tilde\gamma^0(p)}{1/2-\tilde\gamma^0(p)}\right)\big(g_1(|p|)\alp\cdot p+g_0(|p|)\beta\big) \right]dp\geq0
\label{1st_order_convex_cond}
\end{equation}
for all $g_1,g_0\leq0$.

We notice that the function $x\mapsto \ln\left(\frac{1/2+x}{1/2-x}\right)$ is odd, hence
\begin{equation}
\forall\gamma\in\cB,\quad
\ln\left(\frac{1/2+\gamma(p)}{1/2-\gamma(p)}\right)=
\text{sgn}(\gamma)\ln\left(\frac{1/2+\|\gamma(p)\|}{1/2-\|\gamma(p)\|}\right),
\label{equality_log}
\end{equation}
with $\text{sgn}(\gamma) = \gamma/|\gamma|$. We obtain that
$$\ln\left(\frac{1/2+\tilde\gamma^0(p)}{1/2-\tilde\gamma^0(p)}\right)=\tilde\gamma^0(p) F(\|\tilde\gamma^0(p)\|)$$
where
$F(x)=\ln\left(\frac{1/2+x}{1/2-x}\right)/x$.
On the other hand, we can write
$$D_{\tilde\gamma^0}=d_1(|p|)\alp\cdot p+d_0(|p|)\beta$$
where $d_1$ and $d_0$ are given by \cite[Eq. (72)-(73)]{HLSo}. Using
$f_1,f_0\leq0$, we immediately see that
\begin{equation}
 d_1(|p|)\geq 1\quad \text{ and }\quad d_0(|p|)\geq 1,
\label{prop_gs}
\end{equation}
which in particular proves that
\begin{equation}
\|D_{{\tilde\gamma^0}}(p)\|\geq \|D^0(p)\|\geq |p|.
\label{cond_D_aux_pb}
\end{equation}

All this gives
\begin{multline}
D_{\tilde\gamma^0}+T\ln\frac{1/2+\tilde\gamma^0}{1/2-\tilde\gamma^0}=\big(d_1(|p|)+Tf_1(|p|)F(\|\tilde\gamma^0(p)\|)\big)\alp\cdot p\\ + \big(d_0(|p|)+Tf_0(|p|)F(\|\tilde\gamma^0(p)\|)\big)\beta.
\end{multline}
Inserting this in \eqref{1st_order_convex_cond}, we obtain the first order conditions
\begin{equation}
 \left\{\begin{array}{ll}
d_1(|p|)+Tf_1(|p|)F(\|\tilde\gamma^0(p)\|)\leq0\\
d_0(|p|)+Tf_0(|p|)F(\|\tilde\gamma^0(p)\|)\leq0.
\end{array}\right.
\end{equation}
In particular, because of \eqref{prop_gs} we infer that
\begin{equation}
 \left\{\begin{array}{ll}
f_1(|p|)F(\|\tilde\gamma^0(p)\|)\leq-1/T\\
f_0(|p|)F(\|\tilde\gamma^0(p)\|)\leq-1/T.
\end{array}\right.
\label{inequality_fs}
\end{equation}
As $F(\|\tilde\gamma^0(p)\|)\geq0$ and $f_0\geq -\|\tilde\gamma^0(p)\|$, we obtain from \eqref{inequality_fs}
the inequality $|\gamma(p)|F(\|\gamma(p)\|)\geq1/T$. Hence
\begin{equation}
\|\gamma(p)\|\geq \frac{e^{1/T}-1}{2(1+e^{1/T})}.
\end{equation}
This inequality means that $f_0$ and $f_1$ cannot vanish simultaneously. But we can indeed prove that each of them cannot vanish, as expressed in the

\begin{lemma}\label{lemma_fs_dont_vanish}
Let  $\tilde\gamma^0(p)=f_1(|p|)\alp\cdot p+f_0(|p|)\beta$ be a minimizer of \eqref{auxi_pb}. Then there exists an
$\epsilon>0$ such that
$$f_0 \leq - \epsilon \quad and \quad f_1\leq-\epsilon.$$
\end{lemma}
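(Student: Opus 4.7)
The strategy is to read off the uniform lower bound directly from the pointwise first-order conditions \eqref{inequality_fs}, using the separation estimate supplied by Lemma \ref{lemma_estim_gamma}. The key observation is that the function $F(x)=\ln((1/2+x)/(1/2-x))/x$ appearing in \eqref{inequality_fs} is continuous, strictly positive and strictly increasing on $[0,1/2)$, with $F(0)=4$ and $F(x)\to+\infty$ only as $x\to 1/2$. Consequently, as soon as $\|\tilde\gamma^0(p)\|$ stays uniformly away from $1/2$, the quantity $F(\|\tilde\gamma^0(p)\|)$ admits a uniform upper bound, and that upper bound transfers via \eqref{inequality_fs} into a uniform strictly negative upper bound on both $f_0$ and $f_1$.

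Concretely, I would pick $\epsilon_0>0$ as in Lemma \ref{lemma_estim_gamma}, so that $\|\tilde\gamma^0(p)\|\leq 1/2-\epsilon_0$ for a.e.\ $p\in B(0,\Lambda)$, set $M:=F(1/2-\epsilon_0)\in(0,\infty)$, and invoke monotonicity of $F$ to conclude that $F(\|\tilde\gamma^0(p)\|)\leq M$ a.e. Since \eqref{inequality_fs} gives
\[f_i(|p|)\,F(\|\tilde\gamma^0(p)\|)\leq -\frac1T,\qquad i=0,1,\]
and $F(\|\tilde\gamma^0(p)\|)>0$, one can divide to obtain
\[f_i(|p|)\leq -\frac{1}{T\,F(\|\tilde\gamma^0(p)\|)}\leq -\frac{1}{TM}\]
for a.e.\ $p\in B(0,\Lambda)$. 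Setting $\epsilon:=1/(TM)>0$ yields the claim for both $f_0$ and $f_1$.

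There is no substantive obstacle: the lemma is really an immediate corollary of Lemma \ref{lemma_estim_gamma} combined with the Euler--Lagrange inequalities already derived above. The only point deserving explicit comment is that \eqref{inequality_fs} has to be read \emph{pointwise} a.e.\ rather than only in the integrated form \eqref{1st_order_convex_cond}. This is standard: plugging perturbations $g_0,g_1\leq 0$ of bump-function type localized around an arbitrary $p_0\in B(0,\Lambda)\setminus\{0\}$ into \eqref{1st_order_convex_cond}, and using the identities $\tr_{\C^4}[(\alp\cdot p)^2]=4|p|^2$, $\tr_{\C^4}[\beta^2]=4$, $\tr_{\C^4}[(\alp\cdot p)\beta]=0$, one extracts componentwise the inequalities $d_1(|p|)+T f_1(|p|) F(\|\tilde\gamma^0(p)\|)\leq 0$ and $d_0(|p|)+T f_0(|p|) F(\|\tilde\gamma^0(p)\|)\leq 0$ a.e., which together with \eqref{prop_gs} are exactly \eqref{inequality_fs}.
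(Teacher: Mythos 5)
Your proof is correct and follows essentially the same argument as the paper: invoke Lemma~\ref{lemma_estim_gamma} to separate $\|\tilde\gamma^0(p)\|$ uniformly from $1/2$, use monotonicity of $F$ to bound $F(\|\tilde\gamma^0(p)\|)\leq F(1/2-\epsilon_0)$, and divide the pointwise Euler--Lagrange inequality~\eqref{inequality_fs} to get $f_k\leq -1/(T\,F(1/2-\epsilon_0))$ for $k=0,1$. Your added remark explaining why~\eqref{inequality_fs} holds pointwise a.e.\ (by localizing the test perturbations in~\eqref{1st_order_convex_cond}) is a useful clarification of a step the paper takes for granted, but it does not change the substance of the argument.
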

\begin{proof}
By Lemma \ref{lemma_estim_gamma} we know that $\|\tilde\gamma^0(p)\|\leq 1/2-\epsilon$ for some $\epsilon>0$. By \eqref{inequality_fs} and the monotonicity of $F$ we obtain
$f_k\leq -\frac{1}{T\; F(1/2-\epsilon)}$
for $k=0,1$.
\end{proof}

\begin{lemma}
Let  $\tilde\gamma^0\in\cB$ be a minimizer of \eqref{auxi_pb}. Then it solves the self-consistent equation
\begin{equation}
 \tilde\gamma^0=\frac12\left(\frac{1}{1+e^{ \beta D_{\tilde\gamma^0}}}-\frac{1}{1+e^{-\beta D_{\tilde\gamma^0}}}\right).
\label{equation_min_aux_pb}
\end{equation}
\end{lemma}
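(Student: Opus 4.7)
The plan is to convert the variational inequality into a pointwise Euler--Lagrange identity by exploiting that $\tilde\gamma^0$ sits strictly inside $\cB$, and then invert that identity by functional calculus to obtain \eqref{equation_min_aux_pb}.

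First, I would observe that by combining Lemmas \ref{lemma_estim_gamma} and \ref{lemma_fs_dont_vanish} the coefficients satisfy $-1/2+\epsilon\leq f_0(|p|),f_1(|p|)\leq -\epsilon$ a.e.\ on $B(0,\Lambda)$ for some uniform $\epsilon>0$. Hence for arbitrary bounded $g_0,g_1$ of either sign, the perturbation $\tilde\gamma^0+t(g_1\alp\cdot p+g_0\beta)$ still lies in $\cB$ for $|t|$ small enough, so admissible variations are now \emph{two-sided}. Applying \eqref{1st_order_convex_cond} both to $(g_0,g_1)$ and to $(-g_0,-g_1)$ therefore upgrades the one-sided inequality to the equality
\begin{equation*}
\int_{B(0,\Lambda)}\tr_{\C^4}\!\left[\Bigl(D_{\tilde\gamma^0}(p)+T\ln\frac{1/2+\tilde\gamma^0(p)}{1/2-\tilde\gamma^0(p)}\Bigr)\bigl(g_1\alp\cdot p+g_0\beta\bigr)\right]dp=0
\end{equation*}
for all such $g_0,g_1$.

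Next, by \eqref{equality_log} the logarithm has the form $a(|p|)\alp\cdot p+b(|p|)\beta$, and the same is true of $D_{\tilde\gamma^0}$ by its definition in Theorem \ref{thm_free_case}. Thus the operator in brackets reads $A_1(|p|)\alp\cdot p+A_0(|p|)\beta$ for some scalar functions $A_0,A_1$. Using $\tr_{\C^4}(\alp\cdot p)^2=4|p|^2$, $\tr_{\C^4}\beta^2=4$ and $\tr_{\C^4}[\beta(\alp\cdot p)]=0$, the integral identity decouples into two independent scalar identities in $g_0$ and $g_1$, and arbitrariness of the test functions yields the pointwise Euler--Lagrange equation
\begin{equation*}
D_{\tilde\gamma^0}(p)+T\ln\frac{1/2+\tilde\gamma^0(p)}{1/2-\tilde\gamma^0(p)}=0\qquad\text{for a.e.\ }p\in B(0,\Lambda).
\end{equation*}

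Finally, I would invert this identity. By \eqref{equality_log} its right-hand side is a scalar multiple of $\tilde\gamma^0(p)$, so $\tilde\gamma^0$ and $D_{\tilde\gamma^0}$ commute pointwise and can be manipulated by the usual scalar functional calculus on a common diagonalization in $\C^4$. Exponentiating gives $(1/2+\tilde\gamma^0)/(1/2-\tilde\gamma^0)=e^{-\beta D_{\tilde\gamma^0}}$, and solving for $\tilde\gamma^0$ yields $\tilde\gamma^0=-\tfrac12\tanh(\beta D_{\tilde\gamma^0}/2)$, which coincides with the right-hand side of \eqref{equation_min_aux_pb} after rewriting $\tanh$ as the difference of two Fermi--Dirac distributions. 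I do not anticipate any real obstacle beyond the step of legitimating two-sided variations, which is precisely what the strict interior estimates of the previous two lemmas provide; the matrix functional calculus is routine because all operators involved lie in the algebra spanned by $\alp\cdot p$ and $\beta$ at each $p$.
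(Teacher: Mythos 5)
Your proof is correct and follows essentially the same route as the paper's: the two preceding lemmas place $\tilde\gamma^0$ strictly in the interior of $\cB$, so two-sided variations are admissible, the Euler--Lagrange inequality \eqref{1st_order_convex_cond} upgrades to an equality, and inverting the resulting identity via functional calculus yields \eqref{equation_min_aux_pb}. One small imprecision worth flagging: the claimed uniform lower bound $f_1\geq -1/2+\epsilon$ does not follow from Lemma~\ref{lemma_estim_gamma}, which only gives $f_1(|p|)^2|p|^2+f_0(|p|)^2\leq(1/2-\epsilon)^2$ and hence bounds $f_1$ only away from $p=0$; but this is harmless, since what the argument actually needs is the matrix-norm bound $\|\tilde\gamma^0(p)\|\leq 1/2-\epsilon$ together with $f_0,f_1\leq-\epsilon$, which between them ensure both the sign constraint and the $\|\gamma(p)\|\leq 1/2$ constraint survive perturbations $t(g_1\alp\cdot p+g_0\beta)$ for $|t|$ small of either sign.
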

\begin{proof}
As the constraints are not saturated by Lemmas \ref{lemma_estim_gamma} and \ref{lemma_fs_dont_vanish}, we obtain that the derivative vanishes, i.e.
\begin{equation}
 \left\{\begin{array}{ll}
d_1(|p|)+Tf_1(|p|)F(\|\tilde\gamma^0(p)\|)=0\\
d_0(|p|)+Tf_0(|p|)F(\|\tilde\gamma^0(p)\|)=0.
\end{array}\right.
\end{equation}
which means that
$$D_{\tilde\gamma^0}+T\ln\frac{1/2+\tilde\gamma^0}{1/2-\tilde\gamma^0}=0.$$
Hence $\tilde\gamma^0$ solves \eqref{equation_min_aux_pb}.
\end{proof}

Now we prove that the operator $\tilde\gamma^0$ defined in the
previous step is the unique minimizer of $\cT_T$ on the full space
$\cA$ defined in (\ref{defcA}), not merely on the subset $\cB$ in
(\ref{defcB}).  We have
$$\cT_T(\gamma)-\cT_T(\tilde\gamma^0)=TH(\gamma,\tilde\gamma^0)-\frac{\alpha}{(2\pi)^5}\iint_{B(0,\Lambda)^2}\frac{\tr_{\C^4}
[(\gamma-\tilde\gamma^0)(p)(\gamma-\tilde\gamma^0)(q)]}{|p-q|^2}dp\,dq$$
where $H$ is the relative entropy per unit volume
\begin{multline}
H(\gamma,\tilde\gamma^0)=(2\pi)^{-3}\int_{B(0,\Lambda)}\tr_{\C^4}\big[\left(\half+\gamma\right)\left(\ln\left(\half+\gamma\right)-\ln\left(\half+\tilde\gamma^0\right)\right)\\
+\left(\half-\gamma\right)\left(\ln\left(\half-\gamma\right)-\ln\left(\half-\tilde\gamma^0\right)\right)\big]\,dp. \label{hnew}
\end{multline}
We shall use the important
\begin{lemma}\label{lemma_bd_below_relative_entropy}
For $H$ in (\ref{hnew}) the inequality
\begin{equation}
 TH(\gamma,\tilde\gamma^0)\geq (2\pi)^{-3}\int_{B(0,\Lambda)}\tr_{\C^4}|D_{\tilde\gamma^0}(p)|(\gamma(p)-\tilde\gamma^0(p))^2\,dp
\label{bd_below_relative_entropy}
\end{equation}
holds for all $\gamma\in\cA$.
\end{lemma}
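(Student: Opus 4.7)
The plan is to reduce the claimed inequality to a pointwise-in-$p$ statement on $4\times 4$ matrices and then invoke Klein's inequality exactly as in the proof of Theorem \ref{prop_rel_entropy}. First I observe that, since all operators are translation-invariant and commute on each fiber with the spectral projections of $p$, the relative entropy per unit volume can be written fiber-wise as
$$H(\gamma,\tilde\gamma^0)=(2\pi)^{-3}\int_{B(0,\Lambda)}\tr_{\C^4}f\!\big(\gamma(p),\tilde\gamma^0(p)\big)\,dp,$$
where, for $x,y\in[-1/2,1/2]$,
$$f(x,y)=(\tfrac12+x)\big(\ln(\tfrac12+x)-\ln(\tfrac12+y)\big)+(\tfrac12-x)\big(\ln(\tfrac12-x)-\ln(\tfrac12-y)\big).$$
Hence the problem reduces to a fiber-wise inequality for the hermitian $4\times 4$ matrices $\gamma(p)$ and $\tilde\gamma^0(p)$, at almost every $p\in B(0,\Lambda)$.

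Next I invoke the Euler--Lagrange equation for $\tilde\gamma^0$, established in Theorem \ref{thm_free_case}: on each fiber
$$\tilde\gamma^0(p)=\tfrac12\!\left(\frac{1}{1+e^{\beta D_{\tilde\gamma^0}(p)}}-\frac{1}{1+e^{-\beta D_{\tilde\gamma^0}(p)}}\right),$$
i.e.\ $\tilde\gamma^0(p)$ has precisely the Fermi--Dirac form required by Klein's inequality, with $H=\beta D_{\tilde\gamma^0}(p)$. The scalar inequality in the proof of Theorem \ref{prop_rel_entropy}, namely $f(x,y)\geq (x-y)^2|h|$ whenever $y=\frac12(\frac1{1+e^h}-\frac1{1+e^{-h}})$, then lifts by Klein's inequality (as cited there from \cite[p.~330]{Thirring}) to the $4\times 4$ hermitian setting to give
$$\tr_{\C^4}f\!\big(\gamma(p),\tilde\gamma^0(p)\big)\geq \beta\,\tr_{\C^4}\!\big[|D_{\tilde\gamma^0}(p)|\,(\gamma(p)-\tilde\gamma^0(p))^2\big].$$

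Integrating this fiber-wise inequality over $p\in B(0,\Lambda)$, multiplying by $T=\beta^{-1}$, and using $(2\pi)^{-3}$ normalization yields exactly \eqref{bd_below_relative_entropy}. The only technical point to verify is that Klein's inequality does apply here: this requires $\gamma(p)$ and $\tilde\gamma^0(p)$ to be hermitian with spectrum in $[-1/2,1/2]$, which is the defining property of $\cA$, and it requires the reference matrix $\tilde\gamma^0(p)$ to be of Fermi--Dirac form, which is the self-consistent equation. Since we work fiber-wise on $\C^4$, no functional-analytic subtlety about operator cores or unbounded spectra arises, and the main obstacle of Theorem \ref{prop_rel_entropy}---replacing $D^0$ by the mean-field operator $D_{\tilde\gamma^0}$---is painless because the Euler--Lagrange equation provides exactly the right reference operator.
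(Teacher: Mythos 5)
Your proof is correct and follows the same route as the paper: you apply Klein's inequality (the matrix form of $f(x,y)\geq(x-y)^2|h|$) fiber-wise with $X=\gamma(p)$, $Y=\tilde\gamma^0(p)$, and $H=\beta D_{\tilde\gamma^0}(p)$ (justified by the self-consistent equation from Theorem \ref{thm_free_case}), then integrate over $B(0,\Lambda)$ and multiply by $T=\beta^{-1}$. This is exactly the paper's argument, spelled out in a bit more detail.
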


\begin{proof}
This is a simple application of \eqref{Klein_inequality}, taking $X=\gamma(p)$, $Y=\tilde\gamma^0(p)$ and integrating over the ball $B(0,\Lambda)$.
\end{proof}

Using Lemma \ref{lemma_bd_below_relative_entropy} and the formula
$$\frac{\alpha}{(2\pi)^2}\iint_{B(0,\Lambda)^2}\frac{\tr_{\C^4}
[\gamma(p)\gamma(q)]}{|p-q|^2}dp\,dq=\frac{\alpha}2\int_{\R^3}\frac{\tr_{\C^4}|\check{\gamma}(x)|^2}{|x|}dx$$
where $\check{\gamma}(x)$ is the Fourier inverse of the function $\gamma(p)$, we find
\begin{multline}
\cT_T(\gamma)-\cT_T(\tilde\gamma^0)\geq (2\pi)^{-3}\bigg(\int_{B(0,\Lambda)}\tr_{\C^4}|D_{\tilde\gamma^0}(p)|(\gamma(p)-\tilde\gamma^0(p))^2\,dp\\
- \frac{\alpha}2\int_{\R^3}\frac{\tr_{\C^4}|(\check{\gamma}-\check{\gamma}^0)(x)|^2}{|x|}dx\bigg)
\end{multline}
for all $\gamma\in\cA$.  We now use ideas of
\cite{BBHS,HLS1,HLSo}. Kato's inequality $|x|^{-1}\leq \pi/2|\nabla|$
gives
$$\frac{\alpha}2\int_{\R^3}\frac{\tr_{\C^4}|(\check{\gamma}-\check{\gamma}^0)(x)|^2}{|x|}dx\leq \frac{\alpha\pi}4\int_{B(0,\Lambda)}\tr_{\C^4}|p|(\gamma(p)-\tilde\gamma^0(p))^2\,dp.$$
By \eqref{cond_D_aux_pb} we deduce
$$\cT_T(\gamma)-\cT_T(\tilde\gamma^0)\geq (1-\pi\alpha/4)(2\pi)^{-3}\int_{B(0,\Lambda)}\tr_{\C^4}|D_{\tilde\gamma^0}(p)|(\gamma(p)-\tilde\gamma^0(p))^2\,dp.$$
Hence $\tilde\gamma^0$ is the unique minimizer of $\cT_T$ on $\cA$ when $0\leq\alpha<4/\pi$.
This completes the proof of Theorem \ref{thm_free_case}.\qed

\subsection{Proof of Theorem \ref{thm_prop_full}}\label{sec:proof_thm_prop_full}
The lower bound \eqref{estim_below_BDF} is obtained by following an argument of \cite{BBHS,HLS1}. By \eqref{estim_below_H_BDF} we have for all $\gamma\in\tilde\cK_C$
\begin{multline}
\cF_T(\gamma)\geq
\tr\left(|D_{\tilde\gamma^0}|(\gamma-\tilde\gamma^0)^2\right)-\frac
\alpha 2  \iint \frac{\tr_{\C^4}|(\gamma - \tilde\gamma^0)(x,y)|^2}{|x-y|}dx
dy\\+\frac\alpha2
D(\rho_{\gamma-\tilde\gamma^0}-\nu,\rho_{\gamma-\tilde\gamma^0}-\nu)-\frac\alpha2
D(\nu,\nu). \label{bound_below_Hardy}
\end{multline}
By \eqref{bound_below_free} together with Kato's inequality $|x|^{-1}\leq(\pi/2)|\nabla|\leq(\pi/2)|D^0| $,
$$\iint \frac{\tr_{\C^4}|(\gamma - \tilde\gamma^0)(x,y)|^2}{|x-y|}dx dy\leq \frac\pi2 \tr\left(|D_{\tilde\gamma^0}|(\gamma-\tilde\gamma^0)^2\right)$$
which yields \eqref{estim_below_BDF} when $0\leq\alpha<4/\pi$.

Assume now that $\gamma$ is a minimizer of $\cF_T$ on $\tilde\cK_C$. The proof that it satisfies the self-consistent equation \eqref{SCF_BDF} is the same as in the  case of the reduced BDF functional in Section~\ref{sec:external_red}. Note that because of \eqref{bound_below_Hardy} and $\inf_{\tilde\cK_C}\cF_T\leq0$, we obtain
$$\left(\frac2\pi - \frac\alpha2\right)\iint \frac{\tr_{\C^4}|(\gamma - \tilde\gamma^0)(x,y)|^2}{|x-y|}dx dy+\frac\alpha 2D(\rho_{\gamma-\tilde\gamma^0},\rho_{\gamma-\tilde\gamma^0})\leq \frac\alpha 2D(\nu,\nu).$$
It was proved in \cite[p. 4495]{HLS2} that this implies (under the condition \eqref{condition_uniqueness}) that
\begin{equation}
|D_\gamma|\geq d^{-1}|D^0|
\label{estim_below_scf_operator}
\end{equation}
with
$$d=\left\{1-\alpha\left(\frac\pi2 \sqrt{\frac{\alpha/2}{1-\alpha\pi/4}}+\pi^{1/6}2^{11/6}\right)D(\nu,\nu)^{1/2}\right\}^{-1}.$$
Next we fix some $\gamma'\in\tilde\cK_C$ and use that
\begin{multline}\label{wappler}
H(\gamma',\tilde\gamma^0)=H(\gamma,\tilde\gamma^0)+H(\gamma',\gamma)\\
+\tr\left[(\gamma'-\gamma)\left(\ln\left(\frac{\half+\gamma}{\half-\gamma}\right)-\ln\left(\frac{\half+\tilde\gamma^0}{\half-\tilde\gamma^0}\right) \right)\right].
\end{multline}
Inserting Eq.~\eqref{SCF_BDF} for our minimizer $\gamma$ and Eq.~\eqref{SCF_free} for $\tilde\gamma^0$, we obtain for any $\gamma'\in\tilde\cK_C$ the formula
\begin{equation}
\cF_T(\gamma')=\cF_T(\gamma)+TH(\gamma',\gamma)+\frac\alpha2 D(\rho_{\gamma'-\gamma},\rho_{\gamma'-\gamma})-\frac\alpha2\iint\frac{\tr_{\C^4}|(\gamma - \tilde\gamma^0)(x,y)|^2}{|x-y|}dx dy.
\end{equation}
We may use one more time \eqref{formula_log} and the self-consistent equation \eqref{SCF_BDF} to obtain
$$TH(\gamma',\gamma)\geq \tr\left(|D_{\gamma}|(\gamma'-\gamma)^2\right).$$
By \eqref{estim_below_scf_operator} and Kato's inequality as before, we eventually get
$$\cF_T(\gamma')\geq \cF_T(\gamma)+\frac\alpha2 D(\rho_{\gamma'-\gamma},\rho_{\gamma'-\gamma})+\left(\frac{2}{\pi d}-\frac\alpha2\right)\iint\frac{\tr_{\C^4}|(\gamma - \tilde\gamma^0)(x,y)|^2}{|x-y|}dx dy.$$
Hence we obtain that any minimizer is unique when $\alpha\pi
d/4\leq1$, as stated. Let us remark that the expression in last term
of \eqref{wappler} is indeed a trace-class operator. It would,
however, have been sufficient to  choose $\gamma'$ as trace class
perturbation of $\gamma^0$ and conclude the rest by a density
argument. \qed

\appendix
\section{Appendix: Integral Representation of Relative Entropy}

Here we shall prove the claim made in Section~\ref{sec:def_H} that (\ref{def_H_log}) and (\ref{def_H_int}) coincide as long as $\gamma-\gamma^0\in \gS_1(\gH_\Lambda)$ and the spectrum of $\gamma^0$ does not contain $\pm 1/2$.

From the integral representation (\ref{formula_log}), we have
$$
a(\ln a - \ln b) = \int_0^\infty \frac{a}{a+t} (a-b) \frac 1{b+t} dt\,.
$$
We split the first factor as
$$
\frac{a}{a+t} = \frac{b}{b+t} +\frac{t}{b+t}(a-b)\frac{1}{a+t}
$$
and obtain
\begin{align}\nonumber
a(\ln a - \ln b) & = \int_0^\infty \frac{t}{b+t}(a-b)\frac{1}{a+t} (a-b) \frac 1{b+t} dt \\ \nonumber &\quad + \int_0^\infty \frac{b}{b+t}  (a-b) \frac 1{b+t} dt\,.
\end{align}
Now if $a-b$ is trace class and the spectrum of $b$ is contained in $(0,\infty)$, then
$$
\tr  \int_0^\infty \frac{b}{b+t}  (a-b) \frac 1{b+t} dt = \tr \int_0^\infty  (a-b) \frac b{(b+t)^2} dt = \tr (a-b)\,.
$$
If we apply this reasoning to $a=1/2+\gamma$, $b=1/2+\gamma^0$ and to $a=1/2-\gamma$ and $b=1/2-\gamma^0$, respectively, we thus obtain
\begin{align}
H(\gamma,\gamma^0)&=\tr\Bigg(\int_0^\ii\frac{t}{\half+\gamma+t}(\gamma-\gamma^0)\frac{1}{\half+\gamma^0+t}(\gamma-\gamma^0)\frac{1}{\half+\gamma^0+t}dt\nonumber\\
&\quad+\int_0^\ii\frac{t}{\half-\gamma+t}(\gamma-\gamma^0)\frac{1}{\half-\gamma^0+t}(\gamma-\gamma^0)\frac{1}{\half-\gamma^0+t}dt\Bigg)\nonumber
\end{align}
Changing variables from $t$ to $u=1/(1+2t)$ in the first
integral and $u = - 1/(1+2t)$ in the second integral, respectively, we arrive at the integral representation (\ref{def_H_int}).

\addcontentsline{toc}{section}{References}
\bibliographystyle{amsplain}

\end{document}